\documentclass[11pt]{article}
\usepackage[utf8]{inputenc}
\usepackage{amsmath,amssymb,amsbsy,amsfonts,amsthm,latexsym, amsopn,amstext,amsxtra,euscript,amscd,color,breqn,mathrsfs}

\PassOptionsToPackage{hyphens}{url}\usepackage{hyperref}

\usepackage[capbesideposition=outside,capbesidesep=quad]{floatrow}

\captionsetup{labelfont=bf, justification=justified, singlelinecheck=false, position=above}

\restylefloat{table}

\usepackage{multirow,caption}

\newfont{\teneufm}{eufm10}
\newfont{\seveneufm}{eufm7}
\newfont{\fiveeufm}{eufm5}

\usepackage{systeme}

\usepackage{xcolor}
\usepackage{tikz}
\usetikzlibrary{shapes}
\usetikzlibrary{decorations.markings}

\usepackage{pstricks}
\usetikzlibrary{shapes,arrows}
 \usepackage{pstricks-add}
\usepackage{epsfig}
 \usepackage[space]{grffile} 
 \usepackage{etoolbox} 
 \makeatletter 
 \patchcmd\Gread@eps{\@inputcheck#1 }{\@inputcheck"#1"\relax}{}{}
 \makeatother

\newtheorem{thm}{Theorem}[section]

\newtheorem{rem}[thm]{Remark}

\newtheorem{theorem}[thm]{Theorem}
\newtheorem{corollary}[thm]{Corollary}

\newtheorem{proposition}[thm]{Proposition}

\newtheorem{remark}[thm]{Remark}

\newcommand{\ff}[1]{{\mathbb F}_{#1}}
\newcommand{\ffs}[1]{{\mathbb F}_{#1}^\star}
\newcommand{\ffx}[1]{\ff{#1}[x]}

\newcommand{\ga}{\alpha}

\newcommand{\gd}{\delta}

\def\+{\oplus}

\def\F{{\mathbb F}}

\def\00{{\bf 0}}
\def\11{{\bf 1}}
\def\+{\oplus}

\def\\{\cr}
\def\({\left(}
\def\){\right)}

\newcommand{\cardinality}[1]{\# #1}

\setlength{\marginparwidth}{2.2cm}
\usepackage[colorinlistoftodos,prependcaption,textsize=tiny]{todonotes}

\providecommand{\newoperator}[3]{%
  \newcommand*{#1}{\mathop{#2}#3}}

\newoperator{\FD}{\mathrm{FD}}{\nolimits}

\begin{document}
\title{\bf Low c-differential uniformity for functions modified on subfields}
\author{\Large Daniele Bartoli$^1$, Marco Calderini$^2$, Constanza Riera$^3$, \and \Large Pantelimon St\u anic\u a$^4$
\\ \\
$^1$ Department of Mathematics and Informatics, University of Perugia,\\
 Via Vanvitelli, 1, 06123, Perugia;\\
 {\tt daniele.bartoli@unipg.it}\\
 $^2$ Department of Informatics, University of Bergen\\
Postboks 7803, N-5020, Bergen, Norway;\\
 {\tt Marco.Calderini@uib.no}\\
$^3$Department of Computer Science,\\
 Electrical Engineering and Mathematical Sciences,\\
   Western Norway University of Applied Sciences,\\
  5020 Bergen, Norway; {\tt csr@hvl.no}\\
$^4$Applied Mathematics Department, \\
Naval Postgraduate School, \\
Monterey, CA 93943, USA; {\tt pstanica@nps.edu}
}
\maketitle

\begin{abstract}
In this paper, we construct some piecewise defined functions, and study their $c$-differential uniformity. As a by-product, we improve upon several prior results. Further, we look at concatenations of functions with low differential uniformity and show several results. For example, we prove that given $\beta_i$ (a basis of $\F_{q^n}$ over $\F_q$), some functions
 $f_i$ of $c$-differential uniformities $\delta_i$,  and $L_i$ (specific linearized polynomials defined in terms of $\beta_i$), $1\leq i\leq n$, then $F(x)=\sum_{i=1}^n\beta_i f_i(L_i(x))$ has $c$-differential uniformity equal to $\prod_{i=1}^n \delta_i$.
\end{abstract}
{\bf Keywords:}
Boolean and
$p$-ary functions,
$c$-differentials,
differential uniformity,
perfect and almost perfect $c$-nonlinearity
\newline
{\bf MSC 2020}: 06E30, 11T06, 94A60, 94D10.


\section{Introduction and basic definitions}

Let $p$ be a prime number and $n$ be a positive integer. We let $\F_{p^n}$ be the  finite field with $p^n$ elements, and $\ffs{p^n}=\F_{p^n}\setminus\{0\}$ be its multiplicative group.

We call a function from $\F_{p^n}$ (or $\F_p^n$) to $\F_p$  a {\em $p$-ary  function} on $n$ variables. For positive integers $n$ and $m$, any map $F:\F_{p^n}\to\F_{p^m}$ (or  $\F_p^n\to\F_p^m$)  is called a {\em vectorial $p$-ary  function}, or an {\em $(n,m)$-function}. When $m=n$, $F$ can be uniquely represented as a univariate polynomial over $\F_{p^n}$
 of the form
$F(x)=\sum_{i=0}^{p^n-1} a_i x^i,\ a_i\in\F_{p^n}$,
whose {\em algebraic degree}   is then the largest weight  in the $p$-ary expansion of $i$ (that is, the sum of the digits of the exponents $i$) with $a_i\neq 0$.

  Motivated by~\cite{BCJW02}, who  extended the differential attack on some ciphers  by using a new type of differential,  in~\cite{EFRST20}, the authors introduced a new differential and Difference Distribution Table, in any characteristic, along with the corresponding perfect/almost perfect $c$-nonlinear functions and other notions (this was also developed independently in~\cite{BT} where the authors introduce the concept of quasi planarity). In~\cite{BC21,EFRST20,SPRS20,RS20}, various characterizations of the $c$-differential uniformity were found, and some of the known perfect and almost perfect nonlinear functions have been investigated.

For a $p$-ary $(n,m)$-function   $F:\F_{p^n}\to \F_{p^m}$, and $c\in\F_{p^m}$, the ({\em multiplicative}) {\em $c$-derivative} of $F$ with respect to~$a \in \F_{p^n}$ is the  function
\[
 _cD_{a}F(x) =  F(x + a)- cF(x), \mbox{ for  all }  x \in \F_{p^n}.
\]

For an $(n,n)$-function $F$, and $a,b\in\F_{p^n}$, we let the entries of the $c$-Difference Distribution Table ($c$-DDT) be defined by ${_c\Delta}_F(a,b)=\cardinality{\{x\in\F_{p^n} : F(x+a)-cF(x)=b\}}$. We call the quantity
\[
\delta_{F,c}=\max\left\{{_c\Delta}_F(a,b)\,:\, a,b\in \F_{p^n}, \text{ and } a\neq 0 \text{ if $c=1$} \right\},\]
the {\em $c$-differential uniformity} of~$F$. If $\delta_{F,c}=\delta$, then we say that $F$ is differentially $(c,\delta)$-uniform (or that $F$ has $c$-uniformity $\delta$). If $\delta=1$, then $F$ is called a {\em perfect $c$-nonlinear} ({\em PcN}) function (certainly, for $c=1$, they only exist for odd characteristic $p$; however, as proven in~\cite{EFRST20}, there exist PcN functions for $p=2$, for all  $c\neq1$). If $\delta=2$, then $F$ is called an {\em almost perfect $c$-nonlinear} ({\em APcN}) function.
It is easy to see that if $F$ is an $(n,n)$-function, that is, $F:\F_{p^n}\to\F_{p^n}$, then $F$ is PcN if and only if $_cD_a F$ is a permutation polynomial.

For $c=1$, we recover the classical derivative, PN, APN, differential uniformity and DDT.

In the last years, several constructions of low differentially uniform permutations have been introduced  by modifying some functions on a subfield (see for instance \cite{C21,PHT17,amc19,ZLS}).
In this work we will extend some of the results given in \cite{C21} to the case of the $c$-differential uniformity. From this generalization, we are also able to improve the upper bound obtained in \cite{S21} for the case of a Gold APN function in even characteristics.

\section{An upper bound on the differential uniformity of a piecewise defined function }

Here, we shall give a general result concerning an upper bound for the $c$-differential uniformity of a piecewise defined function, thus generalizing a result of~\cite{C21}.

Before considering the case of the $c$-differential uniformity, we will give a property for some functions having $\gd_{F,1}=4$ when $p=2$.
Indeed, recently in \cite{Carlet2021}, Carlet noticed that for an APN function $F\in\mathbb{F}_{2^s}[x]$ defined on an extension $\ff{2^{ms}}$, with $m$ odd, we have that the equation $F(x+a)+F(x)=b$ does not admit solutions $x\notin\ff{2^s}$, whenever $a\in\ffs{2^s}$ and $b\in\ff{2^s}$.
This result can be extended to the case of differentially $4$-uniform functions.

\begin{proposition}\label{prop:sub}
Let $n=sm$, with $m$ odd, and let $F\in\mathbb{F}_{2^s}[x]$ be a differentially $(1,4)$-uniform function over $\mathbb{F}_{2^n}$. Then, $F(x+a)+F(x)=b$ does not admit solutions $x\in\mathbb{F}_{2^n}\setminus\mathbb{F}_{2^s}$, whenever $a,b \in \mathbb{F}_{2^s}, a\neq0$.
\end{proposition}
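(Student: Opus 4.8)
The plan is to exploit two structural symmetries of the solution set $S=\{x\in\mathbb{F}_{2^n}:F(x+a)+F(x)=b\}$ and to play the resulting parity constraints against the size bound $|S|\le 4$ coming from differential $(1,4)$-uniformity. Write $\phi\colon y\mapsto y^{2^s}$ for the generator of $\mathrm{Gal}(\mathbb{F}_{2^n}/\mathbb{F}_{2^s})$, a cyclic group of order $m$. First I would establish Frobenius invariance of $S$: since $F\in\mathbb{F}_{2^s}[x]$ has coefficients fixed by $\phi$, it commutes with $\phi$, i.e.\ $F(y)^{2^s}=F(y^{2^s})$. Applying $\phi$ to the equation $F(x+a)+F(x)=b$ and using $a^{2^s}=a$, $b^{2^s}=b$ (as $a,b\in\mathbb{F}_{2^s}$) yields $F(x^{2^s}+a)+F(x^{2^s})=b$. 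Hence $\phi(S)=S$, so $S$ is a union of Frobenius orbits, each of size dividing $m$, hence of odd size.

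Next I would record the characteristic-$2$ pairing: replacing $x$ by $x+a$ leaves the equation unchanged, since $F((x+a)+a)+F(x+a)=F(x)+F(x+a)=b$. Thus $y\in S\iff y+a\in S$, and because $a\neq0$ this involution is fixed-point-free, forcing $|S|$ to be even. Combined with $|S|\le \delta_{F,1}=4$, this gives $|S|\in\{0,2,4\}$.

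Now suppose, for contradiction, that some $x\in S$ lies outside $\mathbb{F}_{2^s}$. Its Frobenius orbit $O$ then has odd size $d>1$, so $d\ge 3$; as $O\subseteq S$ and $|S|\le 4$, this forces $d=3$ and $|S|=4$. Thus $S=O\cup\{z\}$ with $O=\{x,x^{2^s},x^{2^{2s}}\}$ and a single extra point $z$. Invariance of $S$ under $\phi$ forces the orbit of $z$ to lie in $S$; since it is disjoint from $O$ and any nontrivial orbit has odd size $\ge 3$, the orbit of $z$ must be a singleton, i.e.\ $z\in\mathbb{F}_{2^s}$. Finally I apply the pairing: $z+a\in S$, and $z+a\in\mathbb{F}_{2^s}$ because $a\in\mathbb{F}_{2^s}$. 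But the three elements of $O$ lie outside $\mathbb{F}_{2^s}$, so the only element of $S$ inside $\mathbb{F}_{2^s}$ is $z$ itself; hence $z+a=z$, contradicting $a\neq0$. This contradiction shows that no solution lies outside $\mathbb{F}_{2^s}$.

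The conceptual heart, and the only place requiring genuine care, is the clash of parities in the final step: the Frobenius action produces orbits of odd length (at least $3$ off the subfield), while the characteristic-$2$ translation $y\mapsto y+a$ produces orbits of even length $2$. Reconciling these within the budget $|S|\le 4$ is exactly what pins $S$ down to a single size-$3$ orbit together with one subfield point, and the translation symmetry then rules out even that last configuration. The oddness of $m$ is used precisely to guarantee that a nontrivial Frobenius orbit cannot have size $2$, which is what makes the two parities genuinely incompatible.
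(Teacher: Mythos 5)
Your proof is correct, and it takes a genuinely different route from the paper's, while resting on the same two raw ingredients: Frobenius stability of the solution set $S$ (because $F$, $a$, $b$ are all defined over $\mathbb{F}_{2^s}$) and the characteristic-$2$ pairing $x\mapsto x+a$. The paper labels the four solutions $x_1,x_1+a,x_2,x_2+a$, tracks where $x_1^{2^s}$ can land among them, and in each case derives a containment of the form $x_i\in\mathbb{F}_{2^{2s}}\cap\mathbb{F}_{2^n}$ or $x_i\in\mathbb{F}_{2^{4s}}\cap\mathbb{F}_{2^n}$, both equal to $\mathbb{F}_{2^s}$ because $m$ is odd. You instead use orbit--stabilizer in the cyclic group $\mathrm{Gal}(\mathbb{F}_{2^n}/\mathbb{F}_{2^s})$ of order $m$ to make every Frobenius orbit odd-sized (the only place $m$ odd enters, stated once and cleanly), and the fixed-point-free involution $y\mapsto y+a$ to make $\lvert S\rvert$ even; the parity clash inside the budget $\lvert S\rvert\le 4$ pins $S$ down to a $3$-orbit plus one subfield point $z$, which then dies because $S\cap\mathbb{F}_{2^s}$ is itself invariant under the involution, forcing $z+a=z$. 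What the paper's case analysis buys is a fully hands-on, self-contained verification via explicit field containments; what yours buys is conceptual economy and scalability: the same parity clash (off-subfield orbits of odd size exceeding $k$, against $\lvert S\rvert\le 2k$ even, with $S\setminus\mathbb{F}_{2^s}$ preserved by the involution) delivers the paper's generalization, Proposition~\ref{prop:subs2k}, with less bookkeeping than the orbit-sequence relabeling argument used there. One small step worth making explicit in a final writeup: every element of the size-$3$ orbit lies outside $\mathbb{F}_{2^s}$ (any element of $\mathbb{F}_{2^s}$ is fixed by $\phi$, so its orbit is a singleton), which is what justifies your identification $S\cap\mathbb{F}_{2^s}=\{z\}$ in the last line.
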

\begin{proof}
Let us consider $a,b \in \mathbb{F}_{2^s}, a\neq0$. Let us denote by $x_1$, $x_1+a$, $x_2$, and $x_2+a$ the four solutions of $F(x+a)+F(x)=b$.

Suppose that one of this solutions is not in $\mathbb{F}_{2^s}$. Let us assume $x_1\notin\mathbb{F}_{2^s}$.

Note that the polynomial $F(x+a)+F(x)+b$ has all coefficients in $\mathbb{F}_{2^s}$, so if $x$ is a zero of the polynomial,  so is $x^{2^s}$. That means that $x_1^{2^s}$ is equal to $x_1$, $x_1+a$, $x_2$, or $x_2+a$.

Suppose $x_1^{2^s}=x_1$. Then, $x_1\in\F_{2^s}$, reaching a contradiction.

Suppose $x_1^{2^s}=x_1+a$. Then, we have  $x_1^{2^{2s}}=x_1^{2^{s}}+a=x_1$, implying that $x_1\in\mathbb{F}_{2^{2s}}\cap\mathbb{F}_{2^n}=\mathbb{F}_{2^s}$ (since $m$ is odd), which gives us a contradiction.

Consider, now, the case $x_1^{2^s}=x_2$ ($x_1^{2^s}=x_2+a$ is similar).
Then, we can have 4  different cases.

\noindent{\em Case $1$}. $x_2^{2^s}=x_2$:
This would imply $x_1^{2^s}\in\mathbb{F}_{2^s}$, which is not possible.

\noindent{\em Case $2$}. $x_2^{2^s}=x_1$:
We would have $x_1^{2^{2s}}=x_1$ and thus $x_1 \in \mathbb{F}_{2^s}$, not possible.

\noindent{\em Case $3$}. $x_2^{2^s}=x_2+a$:
From this, we obtain $x_2^{2^{2s}}=x_2$ and thus $x_1^{2^s}=x_2\in \mathbb{F}_{2^s}$, which is not possible.

\noindent{\em Case $4$}. $x_2^{2^s}=x_1+a$:
In this case, we get that $x_2^{2^{2s}}=x_1^{2^s}+a=x_2+a$, so we have  $x_2^{2^{4s}}=x_2^{2^{2s}}+a=x_2$, implying $x_2\in \mathbb{F}_{2^s}$ which gives us a contradiction.
\end{proof}

In the same way,
we can prove the following generalization: 

\begin{proposition}\label{prop:subs2k}
Let $n=sm$, where $s$ and $m$ are integers, and let $F\in\mathbb{F}_{2^s}[x]$ be a differentially $(1,2k)$-uniform function over $\mathbb{F}_{2^n}$, with $k\geq2$. If $m$ is not divisible by any integer $2\leq t\leq k$, then $F(x+a)+F(x)=b$ does not admit solutions $x\in\mathbb{F}_{2^n}\setminus\mathbb{F}_{2^s}$, whenever $a,b \in \mathbb{F}_{2^s}, a\neq0$.
\end{proposition}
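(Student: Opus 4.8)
The plan is to exploit the joint action on the solution set $S=\{x\in\F_{2^n}:F(x+a)+F(x)=b\}$ of two commuting maps: the translation $\tau\colon x\mapsto x+a$ and the Frobenius $\sigma\colon x\mapsto x^{2^s}$ of $\F_{2^n}$ over $\F_{2^s}$. First I would record the structure of $S$. In characteristic $2$, if $x\in S$ then $F(x+2a)+F(x+a)=F(x)+F(x+a)=b$, so $\tau(x)=x+a\in S$ as well; since $a\neq0$ we have $x\neq x+a$, hence $\tau$ is a fixed-point-free involution on $S$ that partitions it into pairs $\{x,x+a\}$. As $F$ is differentially $(1,2k)$-uniform, $|S|\leq 2k$, so there are at most $k$ such pairs. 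Next, because $F\in\F_{2^s}[x]$ and $a,b\in\F_{2^s}$, the polynomial $F(x+a)+F(x)+b$ has all coefficients in $\F_{2^s}$; raising $F(x+a)+F(x)=b$ to the power $2^s$ and using that $\sigma$ fixes these coefficients gives $F(\sigma(x)+a)+F(\sigma(x))=b$, so $\sigma(S)=S$. Finally, since $a\in\F_{2^s}$ we have $\sigma(x+a)=\sigma(x)+a^{2^s}=\sigma(x)+a$, i.e.\ $\sigma$ and $\tau$ commute, so $\sigma$ permutes the $\tau$-pairs.

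Now suppose, for a contradiction, that some $x_1\in S$ lies outside $\F_{2^s}$, and let $P=\{x_1,x_1+a\}$ be its pair. Let $\ell$ be the length of the $\sigma$-orbit of $P$, that is, the least positive integer with $\sigma^\ell(P)=P$. Since $\sigma^m=\mathrm{id}$ on $\F_{2^n}=\F_{2^{sm}}$, we have $\sigma^m(P)=P$, whence $\ell\mid m$; and since this orbit consists of $\ell$ distinct pairs drawn from the at most $k$ available, $\ell\leq k$.

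If $\ell\geq 2$, then $\ell$ is an integer with $2\leq\ell\leq k$ dividing $m$, contradicting the hypothesis. Hence $\ell=1$, i.e.\ $\{\sigma(x_1),\sigma(x_1)+a\}=\{x_1,x_1+a\}$. The possibility $\sigma(x_1)=x_1$ would give $x_1\in\F_{2^s}$, which is excluded, so necessarily $\sigma(x_1)=x_1+a$. Then $\sigma^2(x_1)=\sigma(x_1)+a=x_1+2a=x_1$, so $x_1\in\F_{2^{2s}}\cap\F_{2^n}=\F_{2^{s\gcd(2,m)}}$. Since $k\geq2$, the hypothesis (with $t=2$) forces $m$ to be odd, so $\gcd(2,m)=1$ and $x_1\in\F_{2^s}$, the final contradiction.

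This is a clean group-action reformulation of the case analysis used for $k=2$ in Proposition~\ref{prop:sub}. The main point requiring care is establishing that $\sigma$ genuinely acts on the set of $\tau$-pairs — which rests on $\sigma$ permuting $S$ (coefficients in $\F_{2^s}$) together with the commutation $\sigma\tau=\tau\sigma$ (needing $a\in\F_{2^s}$) — and that the orbit length divides $m$. The only genuinely delicate case is $\ell=1$, where $\sigma$ fixes the pair $P$ \emph{without} fixing $x_1$, instead swapping $x_1$ with $x_1+a$; this is precisely where the oddness of $m$ is used, so the bound $2k$ (rather than, say, forcing $\sigma(x_1)=x_1$ outright) is handled correctly.
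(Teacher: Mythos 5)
Your proof is correct, and while it runs on the same engine as the paper's --- the Frobenius $\sigma\colon x\mapsto x^{2^s}$ permutes the solution set because $F(x+a)+F(x)+b$ has all coefficients in $\F_{2^s}$, and solutions pair off as $\{x_j,x_j+a\}$ --- your packaging of it as a commuting pair of actions is a genuinely different, and tighter, decomposition. The paper tracks the orbit $O_{x_1}=\{x_1^{2^{is}}\}$ of a single solution and splits on whether $\absval{O_{x_1}}\le k$ or $\absval{O_{x_1}}>k$; in the latter case it produces, via a somewhat informal pigeonhole (``consider the sequence \dots\ up to relabeling the solutions''), some $x_j$ with $x_j^{2^{is}}=x_j+a$ for $i\le k$, and both cases are closed by implicit computations of the type $x\in\F_{2^{2is}}\cap\F_{2^{sm}}=\F_{2^{s\gcd(2i,m)}}$ that the paper leaves unstated. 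You instead let $\sigma$ act on the at most $k$ translation pairs, so the orbit length $\ell$ of the offending pair satisfies both $\ell\le k$ and $\ell\mid m$; the divisibility hypothesis on $m$ then kills $2\le\ell\le k$ in one stroke, and the only surviving case $\ell=1$ with $\sigma(x_1)=x_1+a$ is handled by $m$ odd, which your hypothesis supplies since $k\ge 2$. What your route buys: the relabeling step disappears, the hypothesis enters as a clean divisibility statement $\ell\mid m$ rather than through case-by-case gcd arguments, and each assumption ($a\in\F_{2^s}$ for $\sigma\tau=\tau\sigma$, coefficients in $\F_{2^s}$ for $\sigma(S)=S$, oddness of $m$ for the swap case) is used at a visibly identified point. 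What the paper's version buys is mainly continuity: it is a direct elaboration of the four-case analysis in its Proposition~2.1 for $k=2$, without group-action vocabulary. One small note: where you invoke ``$x_1+2a=x_1$'', you are of course just using characteristic $2$, consistent with your earlier observation that $\tau$ is an involution; stating it that way avoids the slightly odd-looking coefficient $2$.
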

\begin{remark}
We restrict to $k\geq2$ for ease of notation with the constrain $2\leq t\leq k$, but the result is true for $k=1$ if $m$ is odd, as proven in \cite{Carlet2021}.
\end{remark}
\begin{proof}
Let us consider $a,b \in \mathbb{F}_{2^s}, a\neq0$. Without loss of generality, we can suppose that the equation $F(x)+F(x+a)=b$ admits $2k$ solutions, that can be denoted by $x_1,\ldots,x_k,x_1+a,\ldots,x_k+a$. Suppose $x_1\notin\mathbb{F}_{2^s}$ and consider the set $O_{x_1}=\{x_1^{2^{is}}\,:\, 0\le i\le m-1\}=\{x_1^{2^{is}}\,:\, 0\le i\le 2k-1\}$. This last equality holds since the polynomial $F(x)+F(x+a)+b$ has all coefficients in $\mathbb{F}_{2^s}$, if $x$ is a solution, then also $x^{2^s}$ is a solution.

Now, if $|O_{x_1}|\le k$, then there exists $0<i\le k$ such that $x_1^{2^{is}}=x_1$, implying $x_1\in\mathbb{F}_{2^s}$, which gives us a contradiction.

If $|O_{x_1}|> k$, consider $J=\{j\,:\, x_j,x_j+a\in O_{x_1}\}$. We have $J\neq\emptyset$, and there must exists $j\in J$ for which there exist $0<i\le k$ such that $x_j^{2^{is}}=x_j+a$.
Indeed,
consider the sequence
$$
x_1,x_1^{2^s},\ldots
$$
and suppose that for all the pairs $x_j,x_j+a$ in this sequence we cannot have $x_j^{2^{is}}=x_j+a$, for $i\le k$. Then, up to relabeling the solutions, we would have that the first $k$ elements of the sequence are
$$
x_1,x_2(=x_1^{2^s}),\ldots,x_k.
$$
Now, for the next element we need to have one among $x_1+a,\ldots,x_k+a$. So, we would obtain a pair $x_j,x_j+a$ for which there exists $i\le k$ such that $x_j^{2^{is}}=x_j+a$.
Therefore, $x_j^{2^{2is}}=x_j$ for some $i\le k$ and  so $x_j\in\mathbb{F}_{2^s}$, implying $x_1\in\mathbb{F}_{2^s}$, contradiction.
\end{proof}

From Proposition \ref{prop:subs2k}, we can simplify Theorem 4.1. from \cite{C21} for some dimensions.
\begin{theorem}
\label{Caldesim}
Let $n=sm$, where $s$ and $m$ are integers. Let $f$ and $g$ be two polynomials with coefficients in $\F_{2^s}$, that is, $f,g\in\F_{2^s}[x]$, and $g$ permuting $\F_{2^s}$. Suppose that $f$ is a $\delta_{f,1}$-uniform function over $\F_{2^s}$ and $g$ is a $\delta_{g,1}$-uniform function over $\F_{2^n}$, and $m$ is not divisible by any integer $2\leq t\leq k$, where $k=\frac{\delta_{g,1}}{2}$. Then, the function
$$F(x)=f(x)+(f(x)+g(x))(x^{2^s}+x)^{2^n-1}=\begin{cases}
f(x),&\mbox{ if $x\in\mathbb{F}_{2^s}$,}\\
g(x),&\mbox{ if $x\notin\mathbb{F}_{2^s}$,}
\end{cases}$$
is such that
$$_1\Delta_F(a,b)\leq\begin{cases}
\max\{\delta_{f,1},\delta_{g,1}\},&\mbox{ if $a\in\mathbb{F}_{2^s}$,}\\
\delta_{g,1}+2,&\mbox{ if $a\notin\mathbb{F}_{2^s}.$}
\end{cases}$$
\end{theorem}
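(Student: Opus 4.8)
The plan is to count the $x\in\mathbb{F}_{2^n}$ solving $F(x+a)+F(x)=b$ by splitting according to whether each of $x$ and $x+a$ lies in $\mathbb{F}_{2^s}$, using that $F$ agrees with $f$ on $\mathbb{F}_{2^s}$ and with $g$ off $\mathbb{F}_{2^s}$. Throughout I will use that, since $p=2$ and $c=1$, the solution set is stable under $x\mapsto x+a$, so solutions occur in pairs $\{x,x+a\}$.

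For $a\in\mathbb{F}_{2^s}$ the argument is short: then $x\in\mathbb{F}_{2^s}\iff x+a\in\mathbb{F}_{2^s}$, so every solution is ``pure''. I would split on $b$. If $b\in\mathbb{F}_{2^s}$, the solutions inside $\mathbb{F}_{2^s}$ solve $f(x+a)+f(x)=b$ and number at most $\delta_{f,1}$, while those outside solve $g(x+a)+g(x)=b$ and, by Proposition~\ref{prop:subs2k} (here $a,b\in\mathbb{F}_{2^s}$, $a\neq0$, and $m$ meets the divisibility hypothesis), there are none. If $b\notin\mathbb{F}_{2^s}$, there are no solutions inside $\mathbb{F}_{2^s}$ since $f(x+a)+f(x)\in\mathbb{F}_{2^s}$, and the outside solutions of $g(x+a)+g(x)=b$ number at most $\delta_{g,1}$. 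In both cases the count is at most $\max\{\delta_{f,1},\delta_{g,1}\}$.

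The real work is the case $a\notin\mathbb{F}_{2^s}$, where $x$ and $x+a$ can never both lie in $\mathbb{F}_{2^s}$. The ``both outside'' solutions solve $g(x+a)+g(x)=b$, contributing at most $\delta_{g,1}$. For the remaining ``mixed'' solutions, the pairing $x\leftrightarrow x+a$ shows they are precisely the pairs $\{z,z+a\}$ with $z\in\mathbb{F}_{2^s}$ and $g(z+a)+f(z)=b$; hence their number is $2A$, where $A=\#\{z\in\mathbb{F}_{2^s}:g(z+a)+f(z)=b\}$. Everything reduces to proving $A\le1$, which I expect to be the main obstacle.

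To bound $A$ I would exploit $g\in\mathbb{F}_{2^s}[x]$ and the divisibility hypothesis. Applying the Frobenius $\sigma:y\mapsto y^{2^s}$ to $g(z+a)=b+f(z)$ and using $z,f(z)\in\mathbb{F}_{2^s}$ gives $g(z+a^{2^{is}})=b^{2^{is}}+f(z)$ for all $i$. Assuming $z_1\neq z_2$ both lie in $A$, put $d=z_1+z_2\in\mathbb{F}_{2^s}^\star$ and $\lambda=f(z_1)+f(z_2)$; subtracting conjugated equations shows that both orbit-translates $U_t=\{z_t+a^{2^{is}}:0\le i\le r-1\}$, $t=1,2$, consist of solutions of $g(u+d)+g(u)=\lambda$, where $r=[\mathbb{F}_{2^s}(a):\mathbb{F}_{2^s}]\ge2$ divides $m$. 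A short shift argument (applying $\sigma$ to a relation $z_1+a^{2^{is}}=z_2+a^{2^{js}}$) shows $U_1$ and $U_2$ are either disjoint or equal. If disjoint, $g(u+d)+g(u)=\lambda$ has at least $2r$ solutions, so $2r\le\delta_{g,1}=2k$ and $r\le k$; but $r\ge2$ divides $m$, so the hypothesis forces $r\ge k+1$, a contradiction. If equal, the same shift argument forces $2\delta_0=r$ for the (constant) index shift $\delta_0$, so $r$ is even and $2\mid m$, again contradicting the hypothesis since $k\ge2$. Hence $A\le1$, the mixed solutions number at most $2$, and ${}_1\Delta_F(a,b)\le\delta_{g,1}+2$.
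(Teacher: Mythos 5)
Your proof is correct, but it takes a different, self-contained route from the paper's. The paper gives no standalone proof of this theorem: it observes that Proposition~\ref{prop:subs2k} supplies exactly the hypothesis required by \cite[Theorem~4.1]{C21} --- that $g(x+a)+g(x)=b$ has no solutions in $\mathbb{F}_{2^n}\setminus\mathbb{F}_{2^s}$ whenever $a\in\mathbb{F}_{2^s}^{\star}$, $b\in\mathbb{F}_{2^s}$ --- and then invokes that theorem for both bounds. Your case analysis (pure versus mixed solutions, the pairing $x\leftrightarrow x+a$, the count $2A$ for the mixed pairs) is essentially the skeleton of the proof of \cite[Theorem~4.1]{C21}, so in effect you in-line that citation. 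The one place you genuinely diverge is the key bound $A\le 1$: you re-derive it via a Frobenius-orbit argument on the conjugates $a^{2^{is}}$, which amounts to re-proving Proposition~\ref{prop:subs2k} in translated form, when a two-line reduction to the proposition you already invoked is available: if $z_1\ne z_2\in\mathbb{F}_{2^s}$ both satisfy $g(z+a)+f(z)=b$, subtracting the two equations gives $g(u+d)+g(u)=\lambda$ with $u=z_2+a\notin\mathbb{F}_{2^s}$, $d=z_1+z_2\in\mathbb{F}_{2^s}^{\star}$ and $\lambda=f(z_1)+f(z_2)\in\mathbb{F}_{2^s}$, directly contradicting Proposition~\ref{prop:subs2k}. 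Your longer argument is nonetheless sound --- I verified the disjoint-or-equal dichotomy for $U_1,U_2$, the bound $2r\le\delta_{g,1}$ in the disjoint case, and the forced relation $2\delta_0=r$ (hence $2\mid m$) in the equal case --- and it buys a fully self-contained proof at the cost of length. Two minor observations: like the paper's route, your argument implicitly assumes $k\ge2$ (for $k=1$ the divisibility hypothesis is vacuous and one needs $m$ odd, as the remark after Proposition~\ref{prop:subs2k} notes), and your proof never uses the hypothesis that $g$ permutes $\mathbb{F}_{2^s}$, which is indeed irrelevant to the DDT bound and is needed in \cite{C21} only to make $F$ a permutation.
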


From Theorem~\ref{Caldesim}, we have that all the results given in \cite{C21} for the differentially $4$-uniform Gold and Bracken-Leander functions can be extended to other functions, such as the differentially $4$-uniform Kasami function. Indeed, the assumption on the solutions of the derivatives of the modified function is needed for applying Theorem 4.1 in \cite{C21}.
In particular, we have the following.

\begin{theorem}
Let $n=sm$, with $s$ even such that $s/2$ and $m$ are odd. Let $k$ be such that $\gcd(k,n)=2$ and $f(x)=A_1\circ Inv\circ A_2(x)$, where $Inv(x)=x^{-1}$ and $A_1,A_2$ are affine permutations over $\mathbb{F}_{2^s}$. Then
$$F(x)=f(x)+(f(x)+x^{2^{2k}-2^k+1})(x^{2^s}+x)^{2^n-1}=\begin{cases}
f(x),&\mbox{ if $x\in\mathbb{F}_{2^s}$},\\
x^{2^{2k}-2^k+1},&\mbox{ if $x\notin\mathbb{F}_{2^s}$,}
\end{cases}$$
is a differentially $(1,6)$-uniform permutation over $\mathbb{F}_{2^n}$. Moreover, if $s>2$ then the algebraic degree of $F$ is $n-1$. Moreover, the nonlinearity of $F$ is at least $2^{n-1}-2^{\frac{s}{2}+1}-2^{\frac{n}{2}}$.
\end{theorem}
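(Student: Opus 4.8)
The plan is to obtain the differential (here $c=1$) bound directly from Theorem~\ref{Caldesim}, after identifying the two constituents: the affine-twisted inverse $f=A_1\circ Inv\circ A_2$ on $\mathbb{F}_{2^s}$ and the Kasami power $g(x)=x^{2^{2k}-2^k+1}$. First I would note that both lie in $\mathbb{F}_{2^s}[x]$ (the Kasami exponent is an integer, so $g$ has coefficients in $\mathbb{F}_2$). Since $s$ is even, the inverse map on $\mathbb{F}_{2^s}$ is differentially $4$-uniform, and affine permutations preserve differential uniformity, so $\delta_{f,1}=4$; since $\gcd(k,n)=2$, the Kasami function is differentially $4$-uniform over $\mathbb{F}_{2^n}$, so $\delta_{g,1}=4$. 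Consequently the parameter $\delta_{g,1}/2$ appearing in Theorem~\ref{Caldesim} equals $2$, and its divisibility hypothesis, that no integer $2\le t\le 2$ divides $m$, is exactly ``$m$ odd,'' which we assume. Granting (see below) that $g$ permutes $\mathbb{F}_{2^s}$, Theorem~\ref{Caldesim} yields ${}_1\Delta_F(a,b)\le\max\{4,4\}=4$ for $a\in\mathbb{F}_{2^s}$ and ${}_1\Delta_F(a,b)\le\delta_{g,1}+2=6$ for $a\notin\mathbb{F}_{2^s}$; hence $\delta_{F,1}\le 6$, and I would exhibit one pair $(a,b)$ with $a\notin\mathbb{F}_{2^s}$ attaining six preimages to confirm equality.

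Next I would show $F$ is a permutation and, en route, verify the deferred hypothesis. The map $f$ permutes $\mathbb{F}_{2^s}$ as a composition of the inverse permutation with two affine permutations, while $g$ sends $\mathbb{F}_{2^s}$ into itself; thus $F$ is a bijection as soon as $g$ permutes $\mathbb{F}_{2^n}$, i.e. $\gcd(2^{2k}-2^k+1,2^n-1)=1$. Here the arithmetic assumptions are decisive. Since $s/2$ and $m$ are odd, $n=sm\equiv 2\pmod 4$, so $n/2$ is odd; and as $\gcd(k,n)=2$ forces $k$ even, both $k$ and $3k$ carry the unique factor of $2$ dividing $n$, so $n/\gcd(3k,n)$ is odd. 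Writing $2^{2k}-2^k+1=(2^{3k}+1)/(2^k+1)$ and invoking the standard fact that $\gcd(2^a+1,2^n-1)=1$ whenever $n/\gcd(a,n)$ is odd, we get $\gcd(2^{3k}+1,2^n-1)=1$, hence $\gcd(2^{2k}-2^k+1,2^n-1)=1$, so $g$ and therefore $F$ is a permutation. The identical argument over $\mathbb{F}_{2^s}$ (or simply $2^s-1\mid 2^n-1$) gives $\gcd(2^{2k}-2^k+1,2^s-1)=1$, the hypothesis that $g$ permute $\mathbb{F}_{2^s}$.

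For the algebraic degree I would pass to the univariate form, expand $(x^{2^s}+x)^{2^n-1}=\sum_{i=0}^{2^n-1}x^{\,i(2^s-1)+2^n-1}$ (all $\binom{2^n-1}{i}$ are odd), reduce exponents modulo $2^n-1$, and locate the largest $2$-adic weight occurring with nonzero coefficient in $g(x)(x^{2^s}+x)^{2^n-1}$. As $F$ is a permutation its degree is at most $n-1$, so it suffices to exhibit one surviving exponent of weight $n-1$; the assumption $s>2$ is what prevents the top terms from cancelling. For the nonlinearity I would compare $F$ with $g$ through the Walsh transform $W_F(b,a)=\sum_x(-1)^{\mathrm{Tr}_n(bF(x)+ax)}$. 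Because $F$ and $g$ differ only on $\mathbb{F}_{2^s}$, and for $x\in\mathbb{F}_{2^s}$ the tower formula gives $\mathrm{Tr}_n(bf(x))=\mathrm{Tr}_s(f(x)\,\mathrm{Tr}^n_s(b))$ and likewise for $g$ and for $ax$, the correction splits as $W_F(b,a)=W_g(b,a)-W^{(s)}_g(b',a')+W^{(s)}_f(b',a')$ with $b'=\mathrm{Tr}^n_s(b)$, $a'=\mathrm{Tr}^n_s(a)$. Bounding $|W_g|\le 2^{n/2+1}$ (Kasami nonlinearity) and each subfield sum by $2^{s/2+1}$ (nonlinearity of the inverse, resp. of the Kasami restriction, over $\mathbb{F}_{2^s}$) yields $\mathrm{nl}(F)\ge 2^{n-1}-2^{n/2}-2^{s/2+1}$.

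The main obstacle I anticipate lies in the input data rather than in the assembly. One must be sure the Kasami function genuinely has differential uniformity $4$ (not merely $\le 4$) and nonlinearity at least $2^{n-1}-2^{n/2}$ under $\gcd(k,n)=2$ with $n/2$ odd, and, most delicately, that its restriction to the subfield $\mathbb{F}_{2^s}$ still obeys the $2^{s/2+1}$ Walsh bound, since that restriction is the power map $x^{e}$ with $e\equiv 2^{2k}-2^k+1\pmod{2^s-1}$ and is not a priori guaranteed to be a well-behaved exponent. Once these facts are secured, the differential bound is immediate from Theorem~\ref{Caldesim}, the permutation property is pure gcd arithmetic, and the degree reduces to a single coefficient check.
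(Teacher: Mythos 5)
Your proposal takes essentially the same route as the paper, whose proof is a one-line deferral: apply Theorem~\ref{Caldesim} (precisely what the paper says licenses extending the Gold and Bracken--Leander constructions of \cite{C21} to the differentially $4$-uniform Kasami exponent, since $\delta_{g,1}/2=2$ and ``$m$ odd'' is the divisibility hypothesis) together with the permutation, algebraic-degree and nonlinearity arguments of \cite[Theorem~4.2, Proposition~4.1]{C21}, which is exactly what your gcd computation via $2^{2k}-2^k+1=(2^{3k}+1)/(2^k+1)$, your Walsh splitting $W_F(b,a)=W_g(b,a)-W^{(s)}_g(b',a')+W^{(s)}_f(b',a')$ yielding $2^{n-1}-2^{n/2}-2^{s/2+1}$, and your degree sketch reconstruct. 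The inputs you flag as needing verification (exact $4$-uniformity of the Kasami map, the $2^{s/2+1}$ Walsh bound for its restriction to $\mathbb{F}_{2^s}$, and exhibiting a pair $(a,b)$ attaining six solutions) are likewise left implicit by the paper's citation, so your reconstruction is faithful to it.
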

\begin{proof}
The proof follows in a similar way as in \cite[Theorem~4.2, Proposition~4.1]{C21}.
\end{proof}

Theorem 4.1 in \cite{C21} can be extended to the case of $p$-ary functions and $c\ne1$. In the following result, we do not request any condition on the solutions of the derivatives of our functions.
Furthermore, we shall consider piecing more than two functions, but we prefer to state the result for two functions separately since it is the usual subfield modification, and the general case will be more evident.
\begin{thm}
\label{thm:t2}
 Let $p$ is a prime, $n>2$ be an integer, $s$ be a divisor of $n$, $1\neq c\in\F_{p^n}$ fixed, and $F:\F_{p^n}\to \F_{p^n}$ be a $p$-ary $(n,n)$-function defined by
 \[
 F(x)=\begin{cases}
 f(x), &\text{ if } x\in\F_{p^s},\\
 g(x), &\text{ if } x\notin\F_{p^s},
 \end{cases}
 \]
 where $f$ is an $(s,s)$-function of $c'$-differential uniformity $\delta_{f,c'}$ (for all $c'$) and $g\in\F_{p^n}[x]$
 is an $(n,n)$-function of $c'$-differential uniformity $\delta_{g,c'}$ (for all $c'$). Then, the $c$-differential uniformity of $F$ is 
\[
\delta_{F,c} \leq \begin{cases}
&  \delta_{f,0}+\delta_{g,0}, \text{ if } c= 0,\\
&  \max \left\{\delta_{f,c_1}+\delta_{g,c},\delta_{g,c}+2p^s\delta_{g,0} \right\}, \text{ if } c\neq 0,
\end{cases}
\]
where $c=\sum_{i=1}^m c_ig_i$, with $c_i\in\F_{p^s}$ and $\{g_1=1,g_2,\ldots,g_m\}$ is a basis of the extension  $\F_{p^n}$ over $\F_{p^s}$.

More generally, let $t\geq 2$,  $k_i\,|\,k_{i+1}$, $1\leq i\leq t-1$, $k_t=n$, be a sequence of integer divisors, and $f_i$, $1\leq i\leq t$, be some  $(k_i,k_i)$-functions of $c'$-differential uniformity $\delta_{f_i,c'}$ (for all $c'$).
 Further, let $c\in\F_{p^n}$ be fixed, and $F_t:\F_{p^n}\to \F_{p^n}$ be a $p$-ary $(n,n)$-function defined by
 \begin{align*}
     F_t(x)=
     \begin{cases}
f_1(x), &\text{ if } x\in\F_{p^{k_1}},\\
f_2(x), &\text{ if } x\in\F_{p^{k_2}}\setminus \F_{p^{k_1}},\\
\cdots &\cdots\cdots\cdots\\
f_t(x), &\text{ if } x\in\F_{p^{k_t}}\setminus 
\F_{p^{k_{t-1}}}.
     \end{cases}
 \end{align*} 
 Then, the $c$-differential uniformity of $f$ is
\[
\delta_{F_t,c}\leq \delta_{f_n,c}+\sum_{i=1}^{t-1} \max\left\{\delta_{f_i,c^{(i)}}, 2p^{k_i} \sum_{j=1}^{t-i-1} \delta_{f_j,0} \right\},
\]
where $c^{(i)}$ are the projections of $c$ onto $\F_{p^{k_i}}$, via some bases of $\F_{p^n}$ over $\F_{p^{k_i}}$.
\end{thm}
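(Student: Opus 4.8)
The plan is to fix $a,b\in\F_{p^n}$ (with $a\neq 0$ when $c=1$) and bound the number of solutions $x$ of $F(x+a)-cF(x)=b$ by sorting them according to whether each of $x$ and $x+a$ lies in $\F_{p^s}$. Since $\F_{p^s}$ is a field, the membership of $x$ and of $x+a$ is coupled through the location of $a$: if $a\in\F_{p^s}$ then $x\in\F_{p^s}\iff x+a\in\F_{p^s}$, whereas if $a\notin\F_{p^s}$ then $x$ and $x+a$ cannot both lie in $\F_{p^s}$. I would first dispatch the degenerate case $c=0$, where the equation collapses to $F(x+a)=b$; its solutions are the preimages of $b$ under $F$, split as those with $x+a\in\F_{p^s}$ (at most $\delta_{f,0}$) and those with $x+a\notin\F_{p^s}$ (at most $\delta_{g,0}$), giving $\delta_{F,0}\le\delta_{f,0}+\delta_{g,0}$.

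For $c\neq 0$ and $a\in\F_{p^s}$ only two configurations survive: $x,x+a\in\F_{p^s}$, giving $f(x+a)-cf(x)=b$, and $x,x+a\notin\F_{p^s}$, giving $g(x+a)-cg(x)=b$. The second contributes at most $\delta_{g,c}$. For the first, the key observation is that $f(x),f(x+a)\in\F_{p^s}$ while $c=\sum_{i=1}^m c_ig_i$; expanding in the basis $\{g_1=1,\dots,g_m\}$ forces the $g_1$-coordinate to satisfy $f(x+a)-c_1f(x)=b_1$ and the higher coordinates to satisfy $-c_if(x)=b_i$. The number of $x$ meeting the $g_1$-coordinate equation alone is at most $\delta_{f,c_1}$ (note $c_1\in\F_{p^s}$), and the extra coordinate constraints only shrink the solution set, so this block contributes at most $\delta_{f,c_1}$. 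This yields the bound $\delta_{f,c_1}+\delta_{g,c}$ in that branch.

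The heart of the argument is the case $c\neq 0$, $a\notin\F_{p^s}$, where the two \emph{boundary} configurations appear: (i) $x\in\F_{p^s}$, $x+a\notin\F_{p^s}$, with equation $g(x+a)=cf(x)+b$; and (ii) $x\notin\F_{p^s}$, $x+a\in\F_{p^s}$, with equation $f(x+a)=cg(x)+b$; the remaining \emph{interior} configuration $x,x+a\notin\F_{p^s}$ again contributes at most $\delta_{g,c}$. To bound (i) I would partition its solutions by the value $u=f(x)\in\F_{p^s}$: for each fixed $u$ the equation becomes $g(x+a)=cu+b$ with a fixed right-hand side, so the number of admissible $x$ is at most the preimage bound $\delta_{g,0}$; since $f(x)$ takes at most $p^s$ values, (i) contributes at most $p^s\delta_{g,0}$. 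Configuration (ii) is handled symmetrically, grouping by $u=f(x+a)$ and using $c\neq 0$ to rewrite the constraint as $g(x)=c^{-1}(u-b)$, which gives another $p^s\delta_{g,0}$. Altogether this branch is at most $\delta_{g,c}+2p^s\delta_{g,0}$, and combining the two values of $a$ gives the claimed maximum.

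For the general statement I would run the same membership partition over the nested shells $\F_{p^{k_i}}\setminus\F_{p^{k_{i-1}}}$, or equivalently induct on $t$ by writing $F_t$ as the two-piece gluing of the big-field function $f_t$ (off $\F_{p^{k_{t-1}}}$) with the $(t-1)$-piece function $F_{t-1}$ (on $\F_{p^{k_{t-1}}}$) and applying the two-function bound with $f\mapsto F_{t-1}$, $g\mapsto f_t$, $s\mapsto k_{t-1}$; here the projections $c^{(i)}$ are the coordinates of $c$ in a basis of $\F_{p^n}$ over $\F_{p^{k_i}}$. The interior contribution accumulates into $\delta_{f_t,c}$, while each shell boundary contributes a term of the form $\max\{\delta_{f_i,c^{(i)}},\,2p^{k_i}(\cdots)\}$ via the same grouping-by-value estimate. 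I expect the only delicate point, and the main obstacle, to be the boundary counting in (i)--(ii): one must resist estimating the number of boundary $x$ crudely by $2p^s$ and instead exploit that the $\F_{p^s}$-valued quantity $f(x)$ (resp.\ $f(x+a)$) is constant on each class, reducing each class to a genuine preimage count governed by $\delta_{g,0}$. Keeping the bookkeeping of the nested projections $c^{(i)}$ consistent across shells is the corresponding complication in the general case.
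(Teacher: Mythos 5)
Your proposal is correct, and it follows the paper's overall skeleton: the same case decomposition by membership of $x$ and $x+a$ in $\F_{p^s}$ (coupled through the location of $a$), the same basis expansion of $c$ and $b$ over $\F_{p^s}$ to extract the coordinate equation $f(x+a)-c_1f(x)=b_1$, and the same $c=0$ dispatch. Where you genuinely diverge is the core estimate for the two mixed configurations. The paper raises $f(x+a)-cg(x)=b$ to the $p^s$-th power and subtracts, so that $f$ drops out and $X=g(x)$ satisfies the affine linearized equation $c^{p^s-1}X^{p^s}-X=(b^{p^s}-b)/c$, whose solution set is contained in a coset of an (at most) $s$-dimensional $\F_p$-kernel; the at most $p^s$ admissible values of $g(x)$ times the preimage bound $\delta_{g,0}$ give $p^s\delta_{g,0}$. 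You reach the same count more elementarily, with no Frobenius manipulation: partition the boundary solutions by the value $u=f(x)\in\F_{p^s}$ (resp.\ $u=f(x+a)$, using $c\neq 0$ to solve for $g(x)=c^{-1}(u-b)$), and apply the preimage bound $\delta_{g,0}$ for each of the at most $p^s$ values of $u$. In the general statement you also peel the tower from the top ($F_{t-1}$ on $\F_{p^{k_{t-1}}}$, $f_t$ outside) rather than from the bottom as the paper does ($f_1$ on $\F_{p^{k_1}}$, $F_{t-1}$ outside); your direction makes the outside function a single $f_{i+1}$ at each step, so the boundary terms come out as $2p^{k_i}\delta_{f_{i+1},0}$ rather than $2p^{k_i}$ times a sum of $\delta_{f_j,0}$'s, a formally sharper bound of the same shape that also avoids the paper's intermediate step of bounding $\delta_{F_{t-1},0}$ by a sum. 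One small correction: your closing worry is backwards. Each boundary class has $x$ (or $x+a$) ranging over a single additive coset of $\F_{p^s}$, so the ``crude'' estimate $2p^s$ is not something to resist; since $\delta_{g,0}\geq 1$, it is at least as strong as the claimed $2p^s\delta_{g,0}$, and the trivial count alone would already prove the stated bound. Your refined grouping is harmless and matches the theorem's form, but it is not the delicate point you took it to be.
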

NB: Note that, if $c\in\F_{p^s}$, we have $c=c_1$, and $\delta_{f,c_1}=\delta_{f,c}$.
\begin{proof}
We first observe that the polynomial representation of $F$ is $F(x)=f(x)+(g(x)-f(x))(x^{p^s}-x)^{p^n-1}$ (here, we consider the embedding of $f$ as an $(n,n)$-function, by taking $f(x)=0$ for $x\notin\F_{2^s}$). We consider the $c$-differential equation, $F(x+a)-cF(x)=b$, of $F$ at $(a,b)\in\F_{p^n}\times \F_{p^n}$,
\begin{equation}
\begin{split}
\label{eq:cdiffF}
&f(x+a)+(g(x+a)-f(x+a))\left(x^{p^s}-x+a^{p^s}-a \right)^{p^n-1}\\
&\qquad\qquad \qquad
-cf(x)-c(g(x)-f(x))(x^{p^s}-x)^{p^n-1}=b.
\end{split}
\end{equation}
If $c=0$, the equation is either $f(x+a)=b$, or $g(x+a)=b$, depending upon $x+a$ being in $\F_{p^s}$ or not. The first claim follows.

If $c\neq 0$, we consider several cases.

\noindent
{\em Case $1$.} Let $a\in\F_{p^s}$. If $x\in\F_{p^s}$, Equation~\eqref{eq:cdiffF} becomes
\[
f(x+a)-cf(x)=b.
\]

Since $\F_{p^n}$ is an extension of degree $m$ over $\F_{p^s}$, we can write $c=\sum_{i=1}^m c_ig_i$ and $b=\sum_{i=1}^m b_ig_i$, where $b_i,c_i\in\F_{2^s}$ and $\{g_1=1,g_2,\ldots,g_m\}$ is a basis of the extension. Then, the equation above becomes
$$f(x+a)-\left(\sum_{i=1}^m c_ig_i\right)f(x)=\sum_{i=1}^m b_ig_i,
$$
which implies
$$f(x+a)-c_1f(x)=b_1 \mbox{ and } c_if(x)=b_i\ \forall i=2,\ldots,m.$$
This gives a (probably loose, though the $b_i$, and therefore the $\frac{b_i}{c_i}$, go through all values) bound for the number of solutions given by $\delta_{f,c_1}$.

NB: Note that, if $c\in\F_{p^s}$, we have $c=c_1$, and  this bound becomes $\delta_{f,c}$.

If  $x\notin\F_{p^s}$, Equation~\eqref{eq:cdiffF} transforms into
\[
g(x+a)-cg(x)=b,
\]
which has at most $\delta_{g,c}$ solutions.
Therefore, in this case we get at most $\delta_{f,c_1}+\delta_{g,c}$ solutions for~\eqref{eq:cdiffF}.

\noindent
{\em Case $2$.} Let $a\notin\F_{p^s}$. If $x+a\in\F_{p^s},x\notin\F_{p^s}$, then
Equation~\eqref{eq:cdiffF}  becomes
\begin{equation}
\label{eq:fg}
f(x+a)-cg(x)=b.
\end{equation}
We raise Equation~\eqref{eq:fg} to the power $p^s$ and get (using the fact that $(f(x+a))^{p^s}=f(x+a)$, since $x+a\in\F_{p^s}$ and $f$ is an $(s,s)$-function),
$f(x+a)-c^{p^s} g(x)^{p^s}=b^{p^s}$, which combined with~\eqref{eq:fg} renders
\begin{equation}
\label{eq:gg}
g(x)-c^{p^s-1} g(x)^{p^s}=\frac{b^{p^s}-b}{c}.
\end{equation}
The polynomial $c^{p^s-1}  X^{p^s}-X$ is a linearized
polynomial whose kernel is of dimension $s$. Thus, there are at most $p^s\delta_{g,0}$ (since for any root $X_0$ of $c^{p^s-1}  X^{p^s}-X+\frac{b^{p^s}-b}{c}$, there are at most $\delta_{g,0}$ values of $x$ such that $g(x)=X_0$) solutions to Equation~\eqref{eq:gg}.

Next, if $x+a\notin\F_{p^s},x\in\F_{p^s}$,  then \eqref{eq:cdiffF} becomes
$g(x+a)-cf(x)=b$, and an argument similar to the one above gives
\[
g(x+a)^{p^s}-c^{p^s-1} g(x+a)=b^{p^s}-c^{p^s-1}b,
\]
with at most $p^s\delta_{g,0}$ solutions.

It remains to consider $x,x+a\notin\F_{p^s}$. In this case, Equation~\eqref{eq:cdiffF} transforms into $g(x+a)-cg(x)=b$, which has at most $\delta_{g,c}$ solutions.
Putting these counts together, we obtain the first claim of the theorem.

For the general case, we use induction on $t$. The case of $t=2$ was treated in the first part of the proof, and the general case follows similarly.

If $c=0$, the same argument as before will show that $\delta_{F_t,0}\leq \sum_{i=1}^t \delta_{f_i,0}$. Using the notation
\begin{align*}
     F_{t-i+1}(x)=
     \begin{cases}
f_i(x), &\text{ if } x\in\F_{p^{k_i}},\\
f_{i+1}(x), &\text{ if } x\in\F_{p^{k_{i+1}}}\setminus\F_{p^{k_i}},\\
\cdots &\cdots\cdots\cdots\\
f_t(x), &\text{ if } x\in\F_{p^{k_t}}\setminus
\F_{t-1},
     \end{cases}
 \end{align*}
 and applying the induction assumption, we find that
 \[
 \delta_{F_t,c}\leq \delta_{F_{t-1},c}+ \max\{\delta_{f_1,c^{(1)}},2p^{k_1} \delta_{F_{t-1},0} \},
 \]
 if $c\neq 0$. By the first part of the proof, $\delta_{F_{t-1},0}\leq \sum_{i=1}^{t-1} \delta_{f_i,0}$. Moreover, $\delta_{F_{t-1},c}\leq \delta_{F_{t-2},c}+ \max\{\delta_{f_2,c^{(2)}}, 2p^{k_2} \sum_{i=1}^{t-2} \delta_{f_i,0} \}$,
 and by iteration we see that
 \[
 \delta_{F_t,c}\leq \delta_{f_n,c}+\sum_{i=1}^{t-1} \max\left\{\delta_{f_i,c^{(i)}}, 2p^{k_i} \sum_{j=1}^{t-i-1} \delta_{f_j,0} \right\}.
 \]
 The proof is done.
\end{proof}
\begin{rem}\label{reduce}
In the proof above, if $g\in\ffx{p^s}$, when $a\notin\ff{p^s}$ we can get:
for the case $x+a \in\ff{p^s}$
at most $\gd_{g,1/c^{p^s-1}}=\gd_{g,c^{p^s-1}}$ solutions;
and for the case $x \in\ff{p^s}$, we get
at most $\delta_{g,c^{p^s-1}}$ solutions.
Indeed, from Equation \eqref{eq:fg} we would have (recalling that $x+a\in \mathbb{F}_{p^s}$)
$$
g(x)^{p^s}-\frac{1}{c^{p^s-1}}g(x)=g(x+a-a^{p^s})-\frac{1}{c^{p^s-1}}g(x)=\frac{b-b^{p^s}}{c^{p^s}}.
$$
The number of solutions $x\notin\mathbb{F}_{p^s}$ such that $x+a\in\mathbb{F}_{p^s}$ is upper bounded by $\delta_{g,1/c^{p^s-1}}=\delta_{g,c^{p^s-1}}$.
The same for the case $x\in\mathbb{F}_{p^s}$ and $x+a\notin\mathbb{F}_{p^s}$. So, we have that for $c\ne 0$,  $\delta_{F,c}\le\max \left\{\delta_{f,c_1}+\delta_{g,c},\delta_{g,c}+2\delta_{g,c^{p^s-1}} \right\}$.
\end{rem}


Surely, there are other ways of piecing a function together, and we look at such a way below.
\begin{thm} 
 Let $p$ is a prime, $n>2$ be an integer, $n=st$, and $\gcd(s,t)=1$. Let $1\neq c\in\F_{p^n}$ fixed, and $F:\F_{p^n}\to \F_{p^n}$ be a $p$-ary $(n,n)$-function defined by
 \[
 F(x)=\begin{cases}
 f(x), &\text{ if } x\in\F_{p^t},\\
 g(x), &\text{ if } x\in\F_{p^s}\setminus\F_{p^t},\\
 h(x), &\text{ if } x\notin(\F_{p^s}\cup\F_{p^t}),
 \end{cases}
 \]
 where $f$ is a $(t,t)$-function of $c'$-differential uniformity $\delta_{f,c'}$ (for all $c'$),  $g$ is an $(s,s)$-function of $c'$-differential uniformity $\delta_{g,c'}$ (for all $c'$),
  and $h$ is an $(n,n)$-function of $c'$-differential uniformity $\delta_{h,c'}$ (for all $c'$). Then, the $c$-differential uniformity of $F$ is
 \[
\delta_{F,c} \leq \begin{cases}
&  \delta_{f,0}+\delta_{g,0}+\delta_{g,0}, \text{ if } c= 0,\\
&  \max \left\{\delta_{f,c_1}+\delta_{g,c_1'}+\delta_{h,c},\delta_{f,c_1}+2p^s\delta_{h,0}+\delta_{h,c},1+p^t\delta_{h,0}+\right.\\
&\left.\min\{p^t\delta_{g,0},p^s\delta_{f,0}\}+\delta_{g,c_1'}+\delta_{h,c},(2p^t+2p^s)\delta_{h,0}+\delta_{h,c} \right\}, \text{ if } c\neq 0,
\end{cases}
\]
where $c=\sum_{i=1}^s c_ig_i=\sum_{i=1}^t c_i'g_i'$, with $c_i\in\F_{2^t},c_i'\in\F_{2^s}$, and $\{g_1=1,g_2,\ldots,g_s\}$,  $\{g_1'=1,g_2',\ldots,g_t'\}$ are  bases of the extension $\F_{p^n}$ over $\F_{p^t}$, respectively, $\F_{p^n}$ over $\F_{p^s}$.
 \end{thm}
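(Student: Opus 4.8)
The plan is to follow the blueprint of the proof of Theorem~\ref{thm:t2}, replacing the two-piece decomposition by a three-piece one. First I would write $F$ in closed polynomial form via nested indicators, using that the indicator of $x\notin\F_{p^r}$ is $(x^{p^r}-x)^{p^n-1}$; concretely,
\[
F(x)=f(x)+\big(g(x)-f(x)\big)(x^{p^t}-x)^{p^n-1}\big(1-(x^{p^s}-x)^{p^n-1}\big)+\big(h(x)-f(x)\big)(x^{p^s}-x)^{p^n-1}(x^{p^t}-x)^{p^n-1},
\]
which reduces to $f$ on $\F_{p^t}$, to $g$ on $\F_{p^s}\setminus\F_{p^t}$, and to $h$ elsewhere. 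The arithmetic engine throughout is the hypothesis $\gcd(s,t)=1$, which gives $\F_{p^s}\cap\F_{p^t}=\F_p$. For $c=0$ the $c$-differential equation collapses to $F(x+a)=b$, which splits over the three regions and gives $\delta_{F,0}\le\delta_{f,0}+\delta_{g,0}+\delta_{h,0}$. For $c\neq 0$ I would fix $(a,b)$ and bound the number of $x$ solving $F(x+a)-cF(x)=b$ by partitioning according to which of the three regions $\F_{p^t}$, $\F_{p^s}\setminus\F_{p^t}$, $\F_{p^n}\setminus(\F_{p^s}\cup\F_{p^t})$ contains $x$ and which contains $x+a$, a priori nine cases.

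The two counting tools are exactly those of Theorem~\ref{thm:t2}. When $x$ and $x+a$ lie in the same region, the equation is one of $f(x+a)-cf(x)=b$, $g(x+a)-cg(x)=b$, $h(x+a)-ch(x)=b$; projecting the first two onto the first basis vector $g_1=1$ (resp.\ $g_1'=1$) and using that $f,g$ are valued in $\F_{p^t}$, resp.\ $\F_{p^s}$, bounds these by $\delta_{f,c_1}$, $\delta_{g,c_1'}$, while the third is bounded by $\delta_{h,c}$. When $x$ and $x+a$ lie in different regions, at least one of the two evaluations lands in a proper subfield $\F_{p^t}$ or $\F_{p^s}$; raising the equation to the matching Frobenius power $p^t$ or $p^s$ leaves that subfield-valued evaluation unchanged and, after subtracting, eliminates it, leaving a linearized polynomial (kernel dimension at most $t$, resp.\ $s$) in the surviving evaluation with constant right-hand side. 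Each of its at most $p^t$, resp.\ $p^s$, roots is attained at most $\delta_{\cdot,0}$ times, giving the bounds $p^t\delta_{h,0}$ and $p^s\delta_{h,0}$ for transitions between a proper subfield and the complement. For a transition directly between $\F_{p^t}$ and $\F_{p^s}$ both evaluations are subfield-valued, so either Frobenius power may be applied and one retains the smaller estimate $\min\{p^t\delta_{g,0},p^s\delta_{f,0}\}$.

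It then remains to assemble, for each admissible location of $a$, the sum of the case bounds over the feasible transitions, and to take the maximum; the four expressions in the claim arise from the four essentially different locations of $a$. If $a\in\F_p$, all three regions are preserved and one obtains $\delta_{f,c_1}+\delta_{g,c_1'}+\delta_{h,c}$. If $a\in\F_{p^t}\setminus\F_p$, then $\F_{p^t}$ is preserved while the other two regions interact with the complement through two transitions between $\F_{p^s}\setminus\F_{p^t}$ and the complement, yielding $\delta_{f,c_1}+2p^s\delta_{h,0}+\delta_{h,c}$. The location $a\in\F_{p^s}\setminus\F_{p^t}$ produces the third expression, and $a\notin\F_{p^s}\cup\F_{p^t}$ produces a bound of the form $(2p^t+2p^s)\delta_{h,0}+\delta_{h,c}$. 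The role of $\gcd(s,t)=1$ is decisive here: any transition directly between the two proper subfields is confined to a single coset $\F_{p^t}\cap(\F_{p^s}-a)$ of $\F_p=\F_{p^s}\cap\F_{p^t}$, hence of size at most $p$, so it contributes only a lower-order amount, and it is precisely these $\F_p$-confined transitions that produce the $\min\{p^t\delta_{g,0},p^s\delta_{f,0}\}$ term and the isolated constant in the third bound.

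The main obstacle is this final bookkeeping rather than the counting tools, which are routine once a single case is isolated. Precisely, for $a\in\F_{p^s}\setminus\F_{p^t}$ (and again for $a\notin\F_{p^s}\cup\F_{p^t}$) one must decide which of the nine region pairs actually occur, split each proper-subfield region into its $\F_p$-part and its complement, verify that the transitions forced through the small intersection $\F_p$ live on cosets of size at most $p$ so that they may be bounded by $\min\{p^t\delta_{g,0},p^s\delta_{f,0}\}$ and by the isolated $1$, and check that every $x$ is counted in exactly one case so the summation is legitimate. Getting these residual, $\F_p$-localized contributions accounted for correctly is the delicate point on which the precise shape of the third bound depends; the subfield-Frobenius estimates themselves are immediate once the feasible cases have been pinned down.
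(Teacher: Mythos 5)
Your route is the same as the paper's: split by the location of $a$ among $\F_p$, $\F_{p^t}\setminus\F_p$, $\F_{p^s}\setminus\F_p$, and the complement of $\F_{p^s}\cup\F_{p^t}$; then bound each feasible pair of regions for $(x,x+a)$, using projection onto the first basis component for the same-region $f$- and $g$-equations and the Frobenius-raising/linearized-kernel count of Theorem~\ref{thm:t2} for the mixed ones. The closed polynomial form is cosmetic (the paper argues straight from the case definition here), grouping $a\in\F_p$ rather than treating $a=0$ separately is a harmless variant, and your derivations of the first two expressions in the maximum coincide with the paper's.

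The genuine gap lies exactly in the bookkeeping you defer, and it is not routine. First, your own mechanism cannot deliver the ``isolated $1$'': for $a\in\F_{p^s}\setminus\F_p$, the transition $x\in\F_{p^t}$, $x+a\in\F_{p^s}\setminus\F_{p^t}$ confines $x$ to $\F_{p^t}\cap(\F_{p^s}-a)$, a coset of $\F_p=\F_{p^s}\cap\F_{p^t}$ of size exactly $p$, and nothing prevents $g(x+a)-cf(x)=b$ from holding for all $p$ of these $x$; a coset bound yields $p$, never $1$ (the paper's proof gets $1$ by considering only $x=0$ in this case, without justifying the restriction). Second, a faithful execution of your nine-case grid produces feasible transitions absent from the stated expressions: for $a\in\F_{p^s}\setminus\F_p$, the pair $x\notin\F_{p^s}\cup\F_{p^t}$, $x+a\in\F_{p^t}\setminus\F_p$ does occur (set $x=y-a$ with $y\in\F_{p^t}\setminus\F_p$; then $x\in\F_{p^s}$ would force $y\in\F_p$, and $x\in\F_{p^t}$ would force $a\in\F_p$), contributing a further $p^t\delta_{h,0}$ via $f(x+a)-ch(x)=b$ that the third expression does not contain; and for $a\notin\F_{p^s}\cup\F_{p^t}$, direct transitions between $\F_{p^t}$ and $\F_{p^s}\setminus\F_{p^t}$ are possible (take $x\in\F_{p^t}\setminus\F_p$ and $x+a\in\F_{p^s}\setminus\F_p$; then $a\notin\F_{p^s}\cup\F_{p^t}$ indeed holds), each confined to an $\F_p$-coset, so they add terms beyond $(2p^t+2p^s)\delta_{h,0}+\delta_{h,c}$ rather than contributing nothing, contrary to your claim that such transitions only feed the third bound. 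The paper's own five-case lists for these two locations of $a$ omit precisely these transitions, which is how it lands on the stated right-hand side; consequently you cannot both carry out the exhaustive case check you announce and arrive at the theorem's expressions --- the step you call routine once the cases are pinned down is exactly where your (correct) enumeration and the stated bound part ways.
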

 \begin{proof}
 We need to investigate the number of solutions of $$F(x+a)-cF(x)=b.$$
 If $c=0$, for any $a,b$, the equation is either $f(x+a)=b$, $g(x+a)=b$ or $h(x+a)=b$. The first claim follows.

 Let now $c\neq0$ and $a=0$. In this case, $F(x+a)-cF(x)=(1-c)F(x)$, and we distinguish three cases:

 \noindent
 {\em Case $1$}. $x\in\F_{p^t}$: In this case, the equation is  $$(1-c)f(x)=b.$$
 As in the proof of Theorem 2.4, this implies that the number of solutions is upper bounded by $\delta_{f,c_1}$, where $c=\sum_{i=1}^s c_ig_i$, where $c_i\in\F_{2^t}$ and $\{g_1=1,g_2,\ldots,g_s\}$ is a basis of the extension of $\F_{p^n}$ over $\F_{p^t}$.

 \noindent
{\em Case $2$}. $x\in\F_{p^s}\setminus\F_p$: In this case, the equation is  $$(1-c)g(x)=b.$$ Similarly as in case 1), the number of solutions is upper bounded by $\delta_{g,c_1'}$, where $c=\sum_{i=1}^t c_i'g_i'$, where $c_i'\in\F_{2^s}$ and $\{g_1'=1,g_2',\ldots,g_t'\}$ is a basis of the extension of $\F_{p^n}$ over $\F_{p^s}$.

 \noindent
 {\em Case $3$}. $x\in\F_{p^n}\setminus(\F_{p^t}\cup\F_{p^s})$: In this case, the equation is  $$(1-c)h(x)=b.$$ The upper bound is here $\delta_{h,c}$.

 Let now $c\neq0$ and $a\in\F_{p^t}^*$. We can distinguish four cases:

 \noindent
 {\em Case $1$}.
 $x\in\F_{p^t}$, $x+a\in\F_{p^t}$: In this case the equation is $$f(x+a)-cf(x)=b.$$ As in the proof of Theorem 2.4, this implies that the number of solutions is upper bounded by $\delta_{f,c_1}$, where $c=\sum_{i=1}^s c_ig_i$, where $c_i\in\F_{2^t}$ and $\{g_1=1,g_2,\ldots,g_s\}$ is a basis of the extension of $\F_{p^n}$ over $\F_{p^t}$.

 \noindent
 {\em Case $2$}. $x\in\F_{p^s}\setminus\F_p$, $x+a\in\F_{p^n}\setminus(\F_{p^t}\cup\F_{p^s})$:
 In this case, the equation is
 $$h(x+a)-cg(x)=b.$$

 Raising to the power  $p^s$ and subtracting, we obtain the equation $$(h(x+a))^{p^s}-c^{p^s-1}h(x)=b^{p^s}-c^{p^s-1}b,$$
 which has as a solution set $b+\F_{p^s}$ (note that, if $c\in\F_{p^s}$, $c^{p^s-1}=1$, and, if $b\in\F_{p^s}$, $b+\F_{p^s}=\F_{p^s}$, so this covers all cases (with nonzero $c$)). The number of solutions is thus upper-bounded by $p^s\delta_{h,0}$.

  \noindent
 {\em Case $3$}. $x\in\F_{p^n}\setminus(\F_{p^t}\cup\F_{p^s})$, $x+a\in\F_{p^s}\setminus\F_p$:
 In this case, the equation is
 $$g(x+a)-ch(x)=b.$$ By similar arguments as the previous case, we obtain the bound $p^s\delta_{h,0}$.

 {\em Case $4$}. $x, x+a\in\F_{p^n}\setminus(\F_{p^t}\cup\F_{p^s})$: In this case, the equation is $$h(x+a)-ch(x)=b,$$
 so we have at most $\delta_{h,c}$.

 Let now $c\neq0,a\in\F_{p^s}\setminus\F_p$. We have now five cases:

  \noindent
 {\em Case $1$}. $x=0$, $x+a\in\F_{p^s}\setminus\F_p$. In this case, the equation is
 $$g(a)-cf(0)=b,$$
 which will be true for some $b$.

  \noindent
 {\em Case $2$}. $x\in\F_{p^t}^*$, $x+a\in\F_{p^n}\setminus(\F_{p^t}\cup\F_{p^s})$: In this case, the equation is $$h(x+a)-cf(x)=b,$$
 so we have at most $p^t\delta_{h,0}$ solutions.

  \noindent
 {\em Case $3$}. $x\in\F_{p^s}\setminus\F_p$, $x+a\in\F_{p^t}$: this case is only possible if $x+a\in\F_p$. Here the equation is $$f(x+a)-cg(x)=b.$$ If we raise to $p^t$, we see that the number of solutions is upper-bounded by $p^t\delta_{g,0}$. However, raising to $p^s$, we obtain an upper bound of $p^s\delta_{f,0}$. From this case, then, we get min$\{p^t\delta_{g,0},p^s\delta_{f,0}\}$.

  \noindent
 {\em Case $4$}. $x, x+a\in\F_{p^s}\setminus\F_p$: Here the equation is $$g(x+a)-cg(x)=b,$$ which gives an upper bound of $\delta_{g,c_1'}$ for the number of solutions,  where $c=\sum_{i=1}^t c_i'g_i'$, with $c_i'\in\F_{2^s}$ and $\{g_1'=1,g_2',\ldots,g_t'\}$ is a basis of the extension of $\F_{p^n}$ over $\F_{p^s}$.

  \noindent
 {\em Case $5$}. $x, x+a\in\F_{p^n}\setminus(\F_{p^t}\cup\F_{p^s})$. In this case, the equation is $$h(x+a)-ch(x)=b,$$ which gives an upper bound of $\delta_{h,c}$.

 Now, let us consider the case $c\neq0,a\in\F_{p^n}\setminus(\F_{p^t}\cup\F_{p^s})$. We consider here five cases:

  \noindent
 {\em Case $1$}. $x\in\F_{p^t}$, $x+a\in\F_{p^n}\setminus(\F_{p^t}\cup\F_{p^s})$. Here the equation is $$h(x+a)-cf(x)=b,$$
 which gives an upper bound of $p^t\delta_{h,0}$ for the number of solutions.

  \noindent
 {\em Case $2$}. $x\in\F_{p^s}\setminus\F_p$, $x+a\in\F_{p^n}\setminus(\F_{p^t}\cup\F_{p^s}).$ Here the equation is $$h(x+a)-cg(x)=b,$$
 which gives an upper bound of $p^s\delta_{h,0}$, for the count.

  \noindent
 {\em Case $3$}. $x\in\F_{p^n}\setminus(\F_{p^t}\cup\F_{p^s})$, $x+a\in\F_{p^t}$. Here the equation is $$f(x+a)-ch(x)=b,$$
 which gives an upper bound of $p^t\delta_{h,0}$, the number of solutions.

  \noindent
 {\em Case $4$}. $x\in\F_{p^n}\setminus(\F_{p^t}\cup\F_{p^s})$, $x+a\in\F_{p^s}\setminus\F_{p}$. Here the equation is $$g(x+a)-ch(x)=b,$$
 which gives an upper bound of $p^s\delta_{h,0}$, for the count of solutions.

  \noindent
 {\em Case $5$}. $x, x+a\in\F_{p^n}\setminus(\F_{p^t}\cup\F_{p^s})$. Here the equation is $$h(x+a)-ch(x)=b,$$
 which gives an upper bound of $\delta_{h,c}$.
 \end{proof}

 \begin{rem}
 Note that, if $c\in\F_{p^s}^*$ and $h\in\F_{p^s}[x]$ or $c\in\F_{p^t}^*$ and $h\in\F_{p^t}[x]$ we can reduce the bound, in a similar way as in Remark~\textup{\ref{reduce}}.
 \end{rem}

If we introduce some extra conditions on the solutions of the derivatives of the function $g$, we can obtain another upper bound on the $c$-differential uniformity of the modified function.

\begin{thm}\label{th:main}
 Let $p$ be a prime, $n>2$ be an integer, $s$ be a divisor of $n$, $1\neq c\in\F_{p^s}$ fixed, and $F:\F_{p^n}\to \F_{p^n}$ be a $p$-ary $(n,n)$-function defined by
 \[
 F(x)=\begin{cases}
 f(x), &\text{ if } x\in\F_{p^s},\\
 g(x), &\text{ if } x\notin\F_{p^s},
 \end{cases}
 \]
 where $f$ is an $(s,s)$-function of $c$-differential uniformity $\delta_{f,c}$ and $g\in\F_{p^s}[x]$ is an $(n,n)$-function of $c$-differential uniformity, $\delta_{g,c}$.
Suppose that:
\begin{itemize}
\item[$(H1)$] for any $a\in\ffs{p^s}$ and $b\in\ff{p^s}$ the equation $g(x+a)-g(x)=b$ has no solution in $\ff{p^n}\setminus\ff{p^s}$.
\item[$(H2)$] for any $a\in\ff{p^s}$ and $b\in\ff{p^s}$ the equation $g(x+a)-cg(x)=b$ has no solution in $\ff{p^n}\setminus\ff{p^s}$.
\end{itemize}
 Then, the $c$-differential uniformity of $F$ is
\[
_c\Delta_{F}(a,b) \leq \begin{cases}
\max\{\gd_{f,c},\gd_{g,c}\},&\mbox{ if $a\in\ff{p^s}$},\\
\gd_{g,c}+2\cdot\gd_{g,0},&\mbox{ if $a\notin\ff{p^s}$},
\end{cases}
\]
\end{thm}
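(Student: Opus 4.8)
The plan is to run the same kind of case analysis as in the proof of Theorem~\ref{thm:t2}, but to exploit two features of the present hypotheses that were absent there: the constant $c$ now lies in the subfield $\F_{p^s}$, and $g\in\F_{p^s}[x]$. Both facts mean that $\F_{p^s}$-valued quantities stay $\F_{p^s}$-valued under the operations appearing in the equation. I would start from the polynomial form $F(x)=f(x)+(g(x)-f(x))(x^{p^s}-x)^{p^n-1}$ and the $c$-differential equation $F(x+a)-cF(x)=b$, and split the solution set according to whether each of $x$ and $x+a$ lies in $\F_{p^s}$. Since $s\mid n$, $\F_{p^s}$ is a subfield, hence additively closed; so when $a\in\F_{p^s}$ one has $x\in\F_{p^s}\iff x+a\in\F_{p^s}$, whereas when $a\notin\F_{p^s}$ the two cannot be simultaneously in $\F_{p^s}$.

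For $a\in\F_{p^s}$ I would further split on whether $b\in\F_{p^s}$. When $x\in\F_{p^s}$ the equation reads $f(x+a)-cf(x)=b$, and its left-hand side lies in $\F_{p^s}$ because $f$ is an $(s,s)$-function and $c\in\F_{p^s}$; so if $b\notin\F_{p^s}$ there are no such solutions, while if $b\in\F_{p^s}$ there are at most $\delta_{f,c}$. When $x\notin\F_{p^s}$ the equation reads $g(x+a)-cg(x)=b$, contributing at most $\delta_{g,c}$ in general, but exactly zero whenever $b\in\F_{p^s}$, by hypothesis $(H2)$. Combining: if $b\in\F_{p^s}$ the count is at most $\delta_{f,c}$, and if $b\notin\F_{p^s}$ it is at most $\delta_{g,c}$, so in all cases at most $\max\{\delta_{f,c},\delta_{g,c}\}$. (This also covers $a=0$, which is legitimate since $c\neq1$, as $0\in\F_{p^s}$ is allowed in $(H2)$.)

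For $a\notin\F_{p^s}$ there are three possibilities. If $x,x+a\notin\F_{p^s}$ the equation is $g(x+a)-cg(x)=b$, giving at most $\delta_{g,c}$ solutions. The two mixed cases are where $(H1)$ enters. In the case $x\in\F_{p^s},\,x+a\notin\F_{p^s}$ the equation is $g(x+a)-cf(x)=b$; I would show the map $x\mapsto g(x+a)-cf(x)$ is injective on $\F_{p^s}$, hence contributes at most one solution. Indeed, if $x_1,x_2\in\F_{p^s}$ gave the same value, then $g(x_2+a+a')-g(x_2+a)=c\big(f(x_1)-f(x_2)\big)$ with $a'=x_1-x_2$; the right-hand side is in $\F_{p^s}$ (as $c,f(x_1),f(x_2)\in\F_{p^s}$), while $y:=x_2+a\notin\F_{p^s}$, so if $a'\neq0$ this contradicts $(H1)$ applied with shift $a'\in\ffs{p^s}$. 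The case $x\notin\F_{p^s},\,x+a\in\F_{p^s}$, with equation $f(x+a)-cg(x)=b$, is handled symmetrically: two solutions would force $g(x_2+a')-g(x_2)=c^{-1}\big(f(x_1+a)-f(x_2+a)\big)\in\F_{p^s}$ with $x_2\notin\F_{p^s}$ and $a'=x_1-x_2\in\ffs{p^s}$, again contradicting $(H1)$. Thus each mixed case contributes at most one solution, so the total for $a\notin\F_{p^s}$ is at most $\delta_{g,c}+2$, which is at most $\delta_{g,c}+2\delta_{g,0}$ since $\delta_{g,0}\geq1$.

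The only delicate point is the treatment of the two mixed cases: one must recognize that the natural object to feed into $(H1)$ is the difference of two putative solutions, that the corresponding shift $a'=x_1-x_2$ lands in $\ffs{p^s}$ (because the two relevant shifted arguments both sit in $\F_{p^s}$), and that the resulting additive increment stays in $\F_{p^s}$ precisely because $c\in\F_{p^s}$ and $f$ takes values in $\F_{p^s}$. Everything else is bookkeeping. Note that this argument in fact delivers the sharper bound $\delta_{g,c}+2$; the stated $\delta_{g,c}+2\delta_{g,0}$ follows immediately from $\delta_{g,0}\geq1$.
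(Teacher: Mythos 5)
Your proof is correct, and for $a\in\F_{p^s}$ it is essentially the paper's own argument: split on whether $b\in\F_{p^s}$, note the subfield part contributes nothing when $b\notin\F_{p^s}$ (the left-hand side is $\F_{p^s}$-valued since $c\in\F_{p^s}$), and kill the outside part via $(H2)$ when $b\in\F_{p^s}$, yielding the $\max$. Where you genuinely diverge is in the two mixed cases for $a\notin\F_{p^s}$. The paper raises the mixed equation to the power $p^s$ and subtracts, using $g\in\F_{p^s}[x]$ and $c\in\F_{p^s}$ to confine $y=g(x)$ to a single coset $b'+\F_{p^s}$ with $b'=-b/c$; it then invokes $(H1)$ to show that $g$ evaluated on the relevant coset of $\F_{p^s}$ meets $b'+\F_{p^s}$ in at most one value, and finally pays a factor $\delta_{g,0}$ for the preimages of that value, obtaining at most $\delta_{g,0}$ solutions per mixed case. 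You instead subtract two putative solutions of the same mixed equation and feed the difference directly into $(H1)$: the shift $a'=x_1-x_2$ lands in $\ffs{p^s}$ because the two subfield arguments differ by it, and the increment $g(y+a')-g(y)$ lies in $\F_{p^s}$ because $c\in\F_{p^s}$ and $f$ is $\F_{p^s}$-valued. This gives injectivity, hence at most \emph{one} solution per mixed case, so your accounting is sharper than the paper's: $\delta_{g,c}+2$ versus $\delta_{g,c}+2\delta_{g,0}$, with the stated bound recovered since $\delta_{g,0}\geq 1$. (Indeed, the paper's coset argument could be tightened to match yours: two preimages of the same value $y_0$ inside the coset $-a+\F_{p^s}$ would give $g(x_2+a')-g(x_2)=0$ with $x_2\notin\F_{p^s}$ and $a'\in\ffs{p^s}$, again contradicting $(H1)$.) One shared caveat rather than a gap: in the case $x\notin\F_{p^s}$, $x+a\in\F_{p^s}$ you divide by $c$, exactly as the paper does when forming $b'=-b/c$, so both arguments tacitly assume $c\neq 0$; your other mixed case, pleasantly, avoids that division entirely.
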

\begin{proof}
In order to get the c-differential uniformity of $F$, we need to check the number of solutions of the equation
\begin{equation}\label{eq:derF}
F(x+a)-cF(x)=b.
\end{equation}
Let us consider $a\in\ff{p^s}$. Then, for a solution $x$, we can have that both $x$ and $x+a$ are in $\ff{p^s}$ or none is in $\ff{p^s}$. In the first case, \eqref{eq:derF} becomes
$$
f(x+a)-cf(x)=b,
$$
which has at most $\gd_{f,c}$ solutions if $b\in\ff{p^s}$ and none when $b\notin\ff{p^s}$.

In the second case, we obtain
$$
g(x+a)-cg(x)=b.
$$
From (H2) we have no solution in $\ff{p^n}\setminus\ff{p^s}$ if $b\in\ff{p^s}$.
If $b\notin\ff{p^s}$, the number of solutions is at most $\gd_{g,c}$.
Then, for $a\in\ff{p^s}$ we can have at most $\gd=\max\{\gd_{f,c},\gd_{g,c}\}$.

Let $a \notin \ff{p^s}$. Given a solution $x$ of \eqref{eq:derF}, we can have the following cases:
\begin{itemize}
\item[1.] $x\notin\ff{p^s}$ and $x+a\in\ff{p^s}$;
\item[2.] $x\in\ff{p^s}$ and $x+a\notin\ff{p^s}$;
\item[3.] both $x$ and $x+a$ are not in $\ff{p^s}$.
\end{itemize}
Let us consider Case 1. Then,  \eqref{eq:derF} becomes
\begin{equation}\label{eq:dermistaF}
f(x+a)-cg(x)=b.
\end{equation}
Let us note that $b\notin\ff{p^s}$, otherwise we cannot have a solution of this type since $g(x)\notin\ff{p^s}$, which is derived from (H2) with $a=0$.

From this, raising \eqref{eq:dermistaF} by $p^s$ and substracting \eqref{eq:dermistaF}, we obtain
$$g(x)^{p^s}-g(x)=-\left(\frac{b}{c}\right)^{p^s}+\frac{b}{c}.$$
Denoting by $y=g(x)$ and by $b'=-\frac{b}{c}$, we obtain
$$
y^{p^s}-y={b'}^{p^s}-b'.
$$

The solutions of this last equation are the elements of the coset $b'+\ff{p^s}$.
Now, $x\in a+\ff{p^s}$.
Therefore, we need to check how many elements we have in $g(a+\ff{p^s})\cap \left(b'+\ff{p^s}\right)$, where $g(a+\ff{p^s}):=\{g(x)\,:\,x\in a+\ff{p^s}\}$.
Suppose that $\left|g(a+\ff{p^s})\cap \left(b'+\ff{p^s}\right)\right|\ge 2$. Then, there exist $x_1,x_2,y_1,y_2\in\ff{p^s}$ such that $b+y_1=g(a+x_1)$, $b+y_2=g(a+x_2)$ and $x_1\ne x_2$, $y_1\ne y_2$. Thus,
$$g(a+x_1)-g(a+x_2)=y_1-y_2.$$ Denoting by $x'= a+x_2\notin\ff{p^s}$ and $a'=x_1-x_2\in\ff{p^s}$, we obtain that
$$g(x'+a')-g(x')=y_1-y_2.$$
This is not possible by (H1). Therefore, $\left|g(a+\ff{p^s})\cap \left(b'+\ff{p^s}\right)\right|\le 1$, implying that we have at most $\gd_{g,0}$ solutions in Case (1), since for any element $y$ in $g(a+\ff{p^s})$ we have $|g^{-1}(y)|\le\gd_{g,0}$.

For Case 2, we obtain, in a similar way, that $|g(a+\ff{p^s})\cap (b+\ff{p^s})|\le 1$, which implies that we have at most $\gd_{g,0}$ solutions.

In the last case, we obtain the equation
$$
g(x+a)-cg(x)=b,
$$
which admits at most $\gd_{g,c}$ solutions for any $b$. Then, for $a\notin\ff{p^s}$, Equation~\eqref{eq:derF} admits at most $\gd_{g,c}+2\cdot\gd_{g,0}$ solutions.
\end{proof}

\begin{remark}\label{rm:noh2}
We can note that if we remove condition $(H2)$ in Theorem \ref{th:main}, we would obtain that
$$
_c\Delta_{F}(a,b)\le\begin{cases}
\gd_{f,c}+\gd_{g,c}&\mbox{ if $a\in\ff{p^s}$}\\
\gd_{g,c}+2\cdot\gd_{g,0}&\mbox{ if $a\notin\ff{p^s}$}.
\end{cases}$$
Moreover, if $g$ permutes $\ff{p^s}$ then we have also that $\gd_{g,0}=1$.
\end{remark}

For PcN and APcN functions we have a similar result as in Proposition~\ref{prop:sub}.

\begin{proposition}\label{prop:subpcn}
Let $n=sm$, with $s$ and $m$ positive integers.  Let $c\in\ff{p^s}$ and $F\in\mathbb{F}_{p^s}[x]$. Then,
\begin{itemize}
    \item[i)] if $F$ is PcN, $F(x+a)-cF(x)=b$ does not admit solution $x\in\mathbb{F}_{p^n}\setminus\mathbb{F}_{p^s}$, whenever $a,b \in \mathbb{F}_{p^s}$.

    \item[ii)] if $F$ is APcN and $m$ is odd, $F(x+a)-cF(x)=b$ does not admit solution $x\in\mathbb{F}_{p^n}\setminus\mathbb{F}_{p^s}$, whenever $a,b \in \mathbb{F}_{p^s}$.
\end{itemize}
\end{proposition}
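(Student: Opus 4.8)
The plan is to exploit the fact that, since $F\in\F_{p^s}[x]$ and $a,b,c\in\F_{p^s}$, the polynomial $F(x+a)-cF(x)-b$ has all of its coefficients in $\F_{p^s}$. Consequently its solution set in $\F_{p^n}$ is invariant under the Frobenius map $\phi:x\mapsto x^{p^s}$: if $x_0$ is a solution then so is $x_0^{p^s}$, and more generally so is every element of the orbit $\{x_0^{p^{is}}:i\ge 0\}$. This is exactly the observation already used in Proposition~\ref{prop:sub}, and both parts will follow by pitting the size of this orbit against the available bound on the number of solutions. I would organize the whole argument around orbit sizes rather than around the explicit four solutions $x_1,x_1+a,x_2,x_2+a$ of Proposition~\ref{prop:sub}, since for $c\neq 1$ solutions no longer pair up as $\{x,x+a\}$.

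For part (i), I would suppose for contradiction that $x_0\in\F_{p^n}\setminus\F_{p^s}$ solves $F(x+a)-cF(x)=b$. Then $x_0^{p^s}$ is also a solution. Since $F$ is PcN we have $\delta_{F,c}=1$, so the equation has a unique solution; hence $x_0^{p^s}=x_0$, which forces $x_0\in\F_{p^s}$, a contradiction. (When $c=1$ this is read with $a\neq 0$, consistent with the definition of PcN.)

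For part (ii), assume $F$ is APcN, so $\delta_{F,c}=2$, and suppose $x_0\in\F_{p^n}\setminus\F_{p^s}$ is a solution. Again $x_0^{p^s}$ is a solution, and $x_0^{p^s}\neq x_0$ (otherwise $x_0\in\F_{p^s}$); since there are at most two solutions, they are exactly $\{x_0,x_0^{p^s}\}$. Applying Frobenius once more, $x_0^{p^{2s}}$ is also a solution, hence $x_0^{p^{2s}}\in\{x_0,x_0^{p^s}\}$. The case $x_0^{p^{2s}}=x_0^{p^s}$ gives $x_0^{p^s}=x_0$ upon applying $\phi^{-1}$, a contradiction; so we are left with $x_0^{p^{2s}}=x_0$, i.e. $x_0\in\F_{p^{2s}}\cap\F_{p^n}$.

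The only delicate point is closing this last case, and it is precisely where the hypothesis that $m$ is odd enters: since $n=sm$ we have $\F_{p^{2s}}\cap\F_{p^n}=\F_{p^{\gcd(2s,sm)}}=\F_{p^{s\gcd(2,m)}}=\F_{p^s}$, because $\gcd(2,m)=1$. Thus $x_0\in\F_{p^s}$, contradicting $x_0\notin\F_{p^s}$, and both parts are done. I expect no genuine obstacle beyond bookkeeping of the field intersection $\F_{p^{2s}}\cap\F_{p^n}$; the whole proof is a short Frobenius-orbit count, and the role of the parity of $m$ is exactly to prevent the orbit of $x_0$ from fitting inside $\F_{p^{2s}}\setminus\F_{p^s}$.
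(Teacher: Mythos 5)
Your proof is correct and takes essentially the same route as the paper: Frobenius stability of the root set of $F(x+a)-cF(x)-b$ (all coefficients in $\F_{p^s}$), a count of the orbit $\{x_0, x_0^{p^s}, x_0^{p^{2s}}\}$ against the bound $\delta_{F,c}\le 1$ or $2$, and the intersection $\F_{p^{2s}}\cap\F_{p^n}=\F_{p^{\gcd(2s,sm)}}=\F_{p^s}$ for odd $m$. If anything you are slightly more careful than the paper, which tacitly skips the subcase $x_0^{p^{2s}}=x_0^{p^s}$ that you dispatch via injectivity of the Frobenius.
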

\begin{proof}
Suppose that F is APcN and $m$ is odd. We have then that the polynomial $F(x+a)-cF(x)-b$ admits at most $2$ roots for any $a$ and $b$. Then, if $a,b\in\ff{p^s}$, we have that if $x_1$ is a solution so is $x_1^{p^s}$, since $F(x+a)-cF(x)-b$ is a polynomial with coefficients over $\ff{p^s}$. 

Suppose next that $x_1\notin\ff{p^s}$. Then, $x_1^{p^s}=x_2$, where $x_2$ is the second root. So, $x_2^{p^s}$ must be equal to $x_1$, implying $x_1^{p^{2s}}=x_1$. Therefore $x_1\in \ff{p^{2s}}\cap\ff{p^n}=\ff{p^s}$, which gives us a contradiction.

For the PcN case, we have no restriction on $m$ since we have only one root $x_1$ of $F(x+a)-cF(x)-b$, and thus $x_1^{p^s}$ must be equal to $x_1$.
\end{proof}


As for the case of the differential uniformity we can extend the previous result as follows.
\begin{proposition}\label{prop:cdiffH2}
let $n=sm$, $c\in \ff{p^s}$, and $F$ is a $\gd_{F,c}$ $c$-differentially uniform function over $\ff{p^n}$, with coefficients on the subfield $\ff{p^s}$. Then, if for any prime $q\le \gd_{F,c}$, $q\nmid m$, the equation $F(x+a)-cF(x)=b$ does not admit solution $x\in\mathbb{F}_{p^n}\setminus\mathbb{F}_{p^s}$, whenever $a,b \in \mathbb{F}_{p^s}$.
\end{proposition}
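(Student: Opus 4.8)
The plan is to exploit the fact that, since $F$ has coefficients in $\F_{p^s}$ and $a,b,c\in\F_{p^s}$, the polynomial $P(x)=F(x+a)-cF(x)-b$ has all of its coefficients in $\F_{p^s}$. Consequently $P$ is stable under the Frobenius map $\phi\colon x\mapsto x^{p^s}$ of $\F_{p^n}$ over $\F_{p^s}$: raising $P(x_0)=0$ to the power $p^s$ and using $\gamma^{p^s}=\gamma$ for every coefficient $\gamma\in\F_{p^s}$ gives $P(x_0^{p^s})=0$. This is the same mechanism already used in Propositions~\ref{prop:sub} and~\ref{prop:subpcn}; here it is organized around the entire Frobenius orbit of a root, rather than a case analysis on individual conjugates.

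First I would argue by contradiction: suppose $x_1\in\F_{p^n}\setminus\F_{p^s}$ is a solution of $F(x+a)-cF(x)=b$, i.e.\ a root of $P$. Consider the Frobenius orbit
\[
O_{x_1}=\{x_1,\ x_1^{p^s},\ x_1^{p^{2s}},\ \dots\}.
\]
Let $d=\cardinality{O_{x_1}}$ be its size, equal to the least positive integer with $x_1^{p^{ds}}=x_1$, equivalently $d=[\F_{p^s}(x_1):\F_{p^s}]$, the degree of $x_1$ over $\F_{p^s}$. Two facts then follow at once: $d\mid m$, since $\F_{p^s}(x_1)\subseteq\F_{p^n}$ and $[\F_{p^n}:\F_{p^s}]=m$; and $d\geq 2$, since $x_1\notin\F_{p^s}$. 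By the stability property above, every element of $O_{x_1}$ is again a root of $P$, so $P$ has at least $d$ distinct roots, all lying in $\F_{p^n}$.

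Next I would invoke the $c$-differential uniformity. By definition, $F(x+a)-cF(x)=b$ has at most $\delta_{F,c}$ solutions in $\F_{p^n}$ for the fixed pair $(a,b)$, so $P$ has at most $\delta_{F,c}$ roots; hence $d\leq\delta_{F,c}$. Now pick any prime $q\mid d$. Then $q\leq d\leq\delta_{F,c}$ and $q\mid d\mid m$, so $q$ is a prime with $q\leq\delta_{F,c}$ dividing $m$, contradicting the hypothesis. Therefore no root $x_1\notin\F_{p^s}$ can exist, which is exactly the claim.

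The argument is essentially a repackaging of the orbit-counting idea of Proposition~\ref{prop:subs2k} into the $c$-differential setting, so I do not expect a genuine obstacle; the points that need care are bookkeeping. One must verify that the orbit length equals the degree over $\F_{p^s}$ and hence divides $m$ (a standard Galois-theoretic fact for finite fields), and that $\delta_{F,c}$ legitimately bounds the number of roots of $P$ for the pair $(a,b)$ at hand, which is immediate from the definition except in the degenerate case $c=1,\ a=0$, where one restricts to $a\neq 0$ exactly as in the definition of differential uniformity. As a sanity check, specializing to $\delta_{F,c}=1$ makes the divisibility hypothesis vacuous (recovering the \emph{PcN} part of Proposition~\ref{prop:subpcn}), while $\delta_{F,c}=2$ forces $2\nmid m$, i.e.\ $m$ odd (recovering the \emph{APcN} part). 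Unlike Proposition~\ref{prop:subs2k}, no use is made of the pairing $x\leftrightarrow x+a$, which is why the relevant bound on primes is $\delta_{F,c}$ rather than $\delta_{F,c}/2$.
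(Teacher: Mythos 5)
Your proof is correct and takes essentially the same approach as the paper's: both consider the Frobenius orbit $O_{x}=\{x^{p^{is}}\}$ of a putative solution outside $\F_{p^s}$, observe that all its elements are again solutions so its size is at most $\gd_{F,c}$, and use the hypothesis that no prime $q\le\gd_{F,c}$ divides $m$ to force the orbit to be trivial. Your write-up merely makes explicit the divisibility bookkeeping ($d\mid m$, then a prime factor of $d$ gives the contradiction) that the paper compresses into the single step $x\in\ff{p^{\gcd(js,n)}}=\ff{p^s}$.
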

\begin{proof}
Let $x$ be a solution of $F(x+a)-cF(x)=b$. Then, all the elements in $O_x=\{x^{p^{is}}\,:\, 0\le i\le m-1\}$ are solutions of this equation. Moreover, since $|O_x|\le \gd_{F,c}$, for some integer $j\le\gd_{F,c}$ we have $x^{p^{js}}=x$, implying that $x\in\ff{p^{\gcd(js,n)}}=\ff{p^s}$.
\end{proof}

\subsection{Shifting Gold-like functions on a subfield}

In~\textup{\cite{S21}}, the author studied the c-differential uniformity of the modified Gold function. In particular he obtained the following result.

\begin{theorem}\label{th:pante_gold}
Let $n=sm$. Let
$$
G(x)=x^{2^k+1}+\ga(x^{2^s}+x)^{2^{n}-1}+\ga=\begin{cases}
x^{2^k+1}+\ga, &\text{ if } x\in\F_{2^s},\\
x^{2^k+1}, &\text{ if } x\notin\F_{2^s},
 \end{cases}
$$
where $1\le k<n$, $\gcd(k,n)=1$, $\alpha \in\ffs{2^s}$. Then, for $c\ne 1$, the c-
differential uniformity of $G$ is $\delta_{G,c}\le 9$.
\end{theorem}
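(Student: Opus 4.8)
The plan is to read $G$ as a subfield modification of the Gold function $g(x)=x^{2^k+1}$ and to feed the $c$- and $0$-differential data of $g$ into the machinery of Theorems~\ref{thm:t2} and~\ref{th:main}. Writing $G(x)=g(x)+\ga(x^{2^s}+x)^{2^n-1}+\ga$, the map $G$ agrees with $f(x):=x^{2^k+1}+\ga$ on $\ff{2^s}$ and with $g$ off $\ff{2^s}$. Since $s\mid n$ and $\gcd(k,n)=1$ force $\gcd(k,s)=1$, one checks that $x^{2^k+1}\in\ff{2^s}$ whenever $x\in\ff{2^s}$, so $f$ is a genuine $(s,s)$-function and $g\in\ff{2^s}[x]$ (indeed $g\in\ff{2}[x]$); thus the structural hypotheses of the modification theorems are met. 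Moreover $f=g+\ga$ only translates the target value, so $\delta_{f,c'}=\delta_{g,c'}$ for every $c'$.

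The two numerical inputs I would establish first concern $g$ alone. First, $\delta_{g,0}=\gcd(2^k+1,2^n-1)$, which equals $1$ when $n$ is odd and $3$ when $n$ is even (using $\gcd(2k,n)\in\{1,2\}$ together with $\gcd(k,n)=1$), so in all cases $\delta_{g,0}\le 3$. Second, $\delta_{g,c}\le 3$ for every $c\ne 1$. For the latter I would normalize $a\neq 0$ by the substitution $x=az$, reducing the Gold $c$-derivative equation $(1-c)x^{2^k+1}+ax^{2^k}+a^{2^k}x+a^{2^k+1}=b$ to $(1-c)z^{2^k+1}+z^{2^k}+z=b''$, and then bound its roots: subtracting the equations for two roots $z$ and $z+u$ yields the affine-linearized relation $(1-c)u\,z^{2^k}+(1-c)u^{2^k}z=D(u)$, whose linear part has kernel $\{0,u\}$ because $\gcd(2^k-1,2^n-1)=2^{\gcd(k,n)}-1=1$; a standard second-order count on the solution set then caps the number of roots at $3$ (this is the known Gold $c$-differential uniformity estimate for $\gcd(k,n)=1$).

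With these in hand I would run the case split of $G(x+a)-cG(x)=b$ according to the membership of $x$ and $x+a$ in $\ff{2^s}$, exactly as in the proof of Theorem~\ref{thm:t2}. When $a\in\ff{2^s}$ only the two \emph{pure} configurations occur (both arguments inside $\ff{2^s}$, or both outside), each governed by a Gold $c$-derivative equation and hence contributing at most $\delta_{g,c}\le 3$, for a total of at most $6$. When $a\notin\ff{2^s}$ the both-outside configuration again yields at most $\delta_{g,c}\le 3$, while each of the two \emph{mixed} configurations ($x$ and $x+a$ on opposite sides of $\ff{2^s}$) is handled by the Frobenius-and-subtract device of Theorem~\ref{thm:t2}: raising the relevant equation to the power $2^s$ and subtracting collapses it to a linearized equation; those of its solutions that additionally lie in the coset $g(a+\ff{2^s})$ are few, and the intersection bound $\bigl|g(a+\ff{2^s})\cap(\text{coset})\bigr|\le 1$ of Theorem~\ref{th:main} caps each mixed configuration by $\delta_{g,0}\le 3$. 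Summing the three configurations for $a\notin\ff{2^s}$ gives $\delta_{g,c}+2\delta_{g,0}\le 3+2\cdot 3=9$, and combining with the $a\in\ff{2^s}$ estimate yields $\delta_{G,c}\le 9$.

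The step I expect to be the real obstacle is the mixed-configuration bound when $c\notin\ff{2^s}$. For $c\in\ff{2^s}$ everything is clean and in fact needs no constraint on $m$: with $x\in\ff{2^s}$ the $2^s$-Frobenius fixes all $\ff{2^s}$-powers of $x$ but moves $a$, so subtraction eliminates the $x^{2^k+1}$ term and leaves $(a^{2^s}-a)x^{2^k}+(a^{2^s}-a)^{2^k}x=\text{const}$, a linearized equation whose kernel $\{0,\,a^{2^s}-a\}$ has size $2$ (again by $\gcd(2^k-1,2^n-1)=1$), so at most two solutions, at most one of which lies in $\ff{2^s}$. For $c\notin\ff{2^s}$, however, the surviving equation retains the $x^{2^k+1}$ term and has degree $2^k+1$, and the relevant kernel coset is $c\,\ff{2^s}$ rather than $\ff{2^s}$, so the clean intersection bound is not directly available; one must instead exploit the explicit Gold shape together with $x\in\ff{2^s}$ (reducing the exponent $2^k$ to $2^{k\bmod s}$ and iterating the $2^s$-Frobenius across the $m$ conjugates) to keep each mixed count at $3$. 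Verifying that this Gold-specific refinement still delivers the per-case bound $\delta_{g,0}$ for arbitrary $c\ne 1$ is where the bulk of the technical work lies; establishing $\delta_{g,c}\le 3$ is the only other nontrivial ingredient.
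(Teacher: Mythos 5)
There is a genuine gap, and it sits exactly where you yourself flagged it: the mixed configurations for general $c$. Your per-case cap of $\gd_{g,0}$ invokes the intersection bound $\absval{g(a+\ff{2^s})\cap(b'+\ff{2^s})}\le 1$ from Theorem~\ref{th:main}, but that bound is conditional on hypothesis $(H1)$, which is not available under the hypotheses of Theorem~\ref{th:pante_gold}: no parity condition on $m$ is assumed, whereas $(H1)$ for the Gold function is only guaranteed when $m$ is odd (Proposition~\ref{prop:sub}, Proposition~\ref{prop:gold}, after \cite{Carlet2021}). Moreover, as you correctly observe, even granting $(H1)$ the intersection argument only works for $c\in\ff{2^s}$: for $c\notin\ff{2^s}$ the solution set of the Frobenius-and-subtract equation in $y=g(x)$ is a coset of $(1/c)\ff{2^s}$, so differences of two candidate intersection points lie in $(1/c)\ff{2^s}$ rather than $\ff{2^s}$, and $(H1)$ cannot be applied. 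Your fallback computation for $c\in\ff{2^s}$ (subtraction kills $x^{2^k+1}$ and leaves $(a+a^{2^s})x^{2^k}+(a+a^{2^s})^{2^k}x=\mathrm{const}$, with at most two roots since $\gcd(2^k-1,2^n-1)=1$) is correct, but for $c\notin\ff{2^s}$ you explicitly defer ``the bulk of the technical work,'' so the proof is incomplete precisely for those $c$.

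The missing device is Remark~\ref{reduce}, which is how the paper actually recovers the bound (the theorem itself is quoted from \cite{S21}; the remark following it re-derives it from Theorem~\ref{thm:t2}). Since $g(x)=x^{2^k+1}$ has coefficients in $\F_2\subseteq\ff{2^s}$, one has $g(x)^{2^s}=g(x^{2^s})$, and in the mixed case $x+a\in\ff{2^s}$ forces $x^{2^s}=x+a+a^{2^s}$; hence Frobenius-and-subtract does not merely yield a linearized condition on $y=g(x)$ but a genuine $c'$-differential equation $g(x+a')-c'g(x)=b'$ for $g$ itself, with nonzero shift $a'=a+a^{2^s}$ and $c'=c^{1-2^s}$ (and symmetrically for the other mixed case). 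By \cite[Theorem 4]{RS20}, $\gd_{g,c'}\le 2^{\gcd(k,n)}+1=3$ for every $c'\ne 1$, and $\gd_{g,1}=2$ (APN) covers $c\in\ffs{2^s}$, so each mixed case contributes at most $3$ unconditionally --- no $(H1)$, no parity of $m$, no restriction on $c$. Combined with your other inputs, which do match the paper ($\gd_{g,0}=\gcd(2^k+1,2^n-1)\le 3$, $\gd_{g,c}\le 3$, and the case split of Theorem~\ref{thm:t2}; note only that the inside--inside case should be bounded by $\gd_{f,c_1}$ with $c_1$ the $\ff{2^s}$-component of $c$, which is still at most $3$), this gives $\gd_{G,c}\le\max\{3+3,\,3+2\cdot 3\}=9$, which is exactly the paper's computation.
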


\begin{rem}
The c-differential uniformity of a Gold function $g(x)=x^{2^k+1}$ has been characterized in \textup{\cite[Theorem 4]{RS20}}. In particular, for $c\ne 1$ we have $\gd_{g,c}\le 2^{\gcd(k,n)}+1$. Applying Theorem~\textup{\ref{thm:t2}} and Remark~\textup{\ref{reduce}} we obtain that the $c$-differential uniformity of $G(x)=x^{2^k+1}+\ga(x^{2^s}+x)^{2^{n}-1}+\ga$ satisfies
$$
\delta_{G,c}\le \begin{cases}
2\cdot(2^{\gcd(k,n)}+1) & \text{ if $c=0$}\\
3\cdot(2^{\gcd(k,n)}+1) & \text{ if $c\ne0$}.
\end{cases}
$$
Therefore, the upper bound in Theorem~\textup{\ref{th:pante_gold}} can be obtained applying Theorem~\textup{\ref{thm:t2}} and Remark~\textup{\ref{reduce}}. Indeed, for $\gcd(k,n)=1$ we have
$$
\delta_{G,c}\le \begin{cases}
6 & \text{ if $c=0$}\\
9 & \text{ if $c\ne0$}.
\end{cases}
$$
\end{rem}

For a Gold-like function defined over $\ff{2^n}$, we can observe the following.
\begin{proposition}\label{prop:gold}
Let $n=sm$, with $m$ odd. For a Gold function $g(x)=x^{2^k+1}$ with $\gcd(n,k)=t$ such that $\ff{2^t}\subset \ff{2^s}$, we have that
$$
g(x+a)+g(x)=b
$$
does not admit solutions in $\ff{2^n}\setminus\ff{2^s}$, whenever $a\in\ffs{2^s}$ and $b \in\ff{2^s} $.
\end{proposition}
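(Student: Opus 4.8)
The plan is to exploit the fact that the derivative of a Gold function is affine, so that its solution set is a coset of an explicitly computable kernel. First I would expand, using characteristic $2$,
\[
g(x+a)+g(x)=(x+a)^{2^k+1}+x^{2^k+1}=a\,x^{2^k}+a^{2^k}x+a^{2^k+1},
\]
and then, dividing by $a\in\ffs{2^s}$, rewrite the equation $g(x+a)+g(x)=b$ as
\[
x^{2^k}+a^{2^k-1}x=\frac{b}{a}+a^{2^k}=:c',
\]
where $c'\in\ff{2^s}$ because $a,b\in\ff{2^s}$. Writing $L(x)=x^{2^k}+a^{2^k-1}x$, a linearized polynomial with coefficients in $\ff{2^s}$, the solution set of $L(x)=c'$ is (when nonempty) a coset of $\ker L$.

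Next I would identify $\ker L$ precisely, and this is the step I expect to carry the weight of the argument, since it is exactly where the compatibility between $t=\gcd(k,n)$ and $s$ is used. For $x\neq0$, the condition $x^{2^k}=a^{2^k-1}x$ is equivalent to $(x/a)^{2^k-1}=1$, i.e. $x/a\in\ff{2^k}\cap\ff{2^n}=\ff{2^{\gcd(k,n)}}=\ff{2^t}$; hence $\ker L=a\,\ff{2^t}$. Here the hypothesis $\ff{2^t}\subset\ff{2^s}$ enters: together with $a\in\ff{2^s}$ it forces $\ker L\subseteq\ff{2^s}$.

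Finally I would run the Frobenius argument already used in Propositions~\ref{prop:sub} and~\ref{prop:subpcn}. Since all coefficients of $L(x)-c'$ lie in $\ff{2^s}$, the Frobenius $\sigma\colon y\mapsto y^{2^s}$ permutes its roots. Suppose, for contradiction, that some solution $x_0$ lies in $\ff{2^n}\setminus\ff{2^s}$. Then $x_0^{2^s}$ is also a solution, and because $L$ is additive the two solutions differ by a kernel element: $d:=x_0^{2^s}+x_0\in\ker L\subseteq\ff{2^s}$. Applying $\sigma$ once more and using $d^{2^s}=d$ yields $x_0^{2^{2s}}=(x_0+d)^{2^s}=x_0^{2^s}+d=x_0$, so $x_0\in\ff{2^{\gcd(2s,n)}}$. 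Since $m$ is odd, $\gcd(2s,n)=\gcd(2s,sm)=s$, whence $x_0\in\ff{2^s}$, a contradiction. Therefore no solution lies outside $\ff{2^s}$, which is the claim. I note that this proposition is exactly the verification of hypothesis $(H1)$ of Theorem~\ref{th:main} for the Gold function, and the virtue of the linearized structure is that only $m$ odd is needed, rather than the stronger divisibility condition of Proposition~\ref{prop:cdiffH2}.
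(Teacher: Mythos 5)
Your proof is correct and takes essentially the same route as the paper: the paper likewise uses the affine structure of the Gold derivative (normalizing to $x^{2^k}+x=b$), observes that the difference of the Frobenius-conjugate roots---your kernel element $d=x^{2^s}+x$---lies in $\ff{2^k}\cap\ff{2^n}=\ff{2^t}\subset\ff{2^s}$, and then applies the Frobenius once more to get $x^{2^{2s}}=x$, hence $x\in\ff{2^{2s}}\cap\ff{2^n}=\ff{2^s}$ since $m$ is odd. Your kernel-and-coset formulation of $L(x)=c'$ is just a more structural dressing of the identical computation.
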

\begin{proof}
The proof follows in a similar way as Lemma 4.1 in \cite{C21}. Indeed, we can consider just the equation
$$
x^{2^k}+x=b.
$$
If $b\in\ff{2^s}$ we obtain that $(x^{2^k}+x)^{2^s}=x^{2^k}+x$, which implies $x^{2^s}+x\in\ff{2^k}$. Therefore, $x^{2^s}+x\in \ff{2^t}\subset\ff{2^s}$. Then, $(x^{2^s}+x)^{2^s}+x^{2^s}+x=0$ implies $x^{2^{2s}}=x$, and thus $x\in\ff{2^{2s}}\cap\ff{2^n}=\ff{2^s}$.
\end{proof}

\begin{rem}
Note that the above proposition cannot be derived directly from Proposition~\textup{\ref{prop:subs2k}}, for $t\geq2$. Indeed, the Gold-like function $g(x)=x^{2^k+1}$ with $\gcd(n,k)=t$ has differential uniformity equal to $2^t$. So, for applying Proposition~\textup{\ref{prop:subs2k}} we need $i\nmid m$ for any $2\le i\le 2^{t-1}$, while in Proposition~\textup{\ref{prop:gold}} we just require $2\nmid m$. For $t=1$, the result follows from~\textup{\cite{Carlet2021}}.
\end{rem}

\begin{theorem}
Let $n=sm$, with $m$ odd. For a Gold function $g(x)=x^{2^k+1}$, with $\gcd(n,k)=t$ such that $\ff{2^t}\subset \ff{2^s}$, and $n/t$ odd. Then, for any fixed $\ga\in\ffs{2^s}$,
$G(x)=x^{2^k+1}+\ga(x^{2^s}+x)^{2^{n}-1}+\ga$ is such that $\gd_{G,c}\le 3$, for any $c\in\ff{2^t}\setminus\{1\}$.
\end{theorem}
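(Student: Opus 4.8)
The plan is to recognize $G$ as precisely the subfield-modified function of Theorem~\ref{th:main}, with $f(x)=x^{2^k+1}+\ga$ playing the role of the $(s,s)$-function on $\ff{2^s}$ and $g(x)=x^{2^k+1}\in\ff{2^s}[x]$ the $(n,n)$-function off the subfield (the rewriting $G(x)=x^{2^k+1}+\ga(x^{2^s}+x)^{2^n-1}+\ga$ evaluates to $f$ on $\ff{2^s}$ and to $g$ elsewhere). Since $\ff{2^t}\subseteq\ff{2^s}$, the fixed $c\in\ff{2^t}\setminus\{1\}$ lies in $\ff{2^s}\setminus\{1\}$, so the theorem applies as soon as hypotheses $(H1)$ and $(H2)$ are verified. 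The target bound $3$ will then fall out of the two cases of Theorem~\ref{th:main}, provided I can show $\gd_{g,c}=\gd_{f,c}=1$ and $\gd_{g,0}=1$: the case $a\in\ff{2^s}$ gives $\max\{\gd_{f,c},\gd_{g,c}\}=1$, and the case $a\notin\ff{2^s}$ gives $\gd_{g,c}+2\gd_{g,0}=3$.

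The heart of the argument, and the step I expect to be the main obstacle, is to prove that the Gold function $g(x)=x^{2^k+1}$ is PcN for every $c\in\ff{2^t}\setminus\{1\}$ when $n/t$ is odd. Since $n/t$ is odd we have $\gcd(2^k+1,2^n-1)=1$, so $g$ permutes $\ff{2^n}$; this gives $\gd_{g,0}=1$ and settles the $a=0$ instance of the $c$-derivative, namely $(1+c)x^{2^k+1}$. For $a\neq0$, the homogeneity $_cD_a g(ay)=a^{2^k+1}\,{_cD_1 g(y)}$ reduces matters to $a=1$. Writing $_cD_1 g(y)=(1+c)y^{2^k+1}+y^{2^k}+y+1$ and equating its values at $y_1$ and $y_2=y_1+d$ with $d\neq0$, then dividing by $(1+c)d$ and substituting $y_2=dw$, the relation collapses to
\[
w^{2^k}+w=\frac{1}{1+c}\left(d^{-1}+d^{-2^k}\right)+1=:W.
\]
The map $w\mapsto w^{2^k}+w$ is $\ff{2^t}$-linear with kernel $\ff{2^t}$, so its image is exactly $\ker\Tr_{n/t}$, where $\Tr_{n/t}\colon\ff{2^n}\to\ff{2^t}$ is the relative trace. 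Because $c\in\ff{2^t}$, the scalar $1/(1+c)$ pulls out of the $\ff{2^t}$-linear trace, and since $\Tr_{n/t}(d^{-1}+d^{-2^k})=2\,\Tr_{n/t}(d^{-1})=0$, I obtain $\Tr_{n/t}(W)=\Tr_{n/t}(1)=(n/t)\cdot1=1$ in $\ff{2^t}$ as $n/t$ is odd. Hence $W\notin\ker\Tr_{n/t}$, the displayed equation has no solution, $_cD_1 g$ is injective, and $g$ is PcN with $\gd_{g,c}=1$.

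Finally I would assemble the pieces. Hypothesis $(H1)$ holds by Proposition~\ref{prop:gold} (valid since $m$ is odd and $\ff{2^t}\subset\ff{2^s}$), and $(H2)$ follows from the PcN-ness of $g$ together with Proposition~\ref{prop:subpcn}(i), which forces the unique solution of $g(x+a)-cg(x)=b$ to lie in $\ff{2^s}$ whenever $a,b\in\ff{2^s}$. For $\gd_{f,c}$, note that adding the constant $\ga$ only shifts the right-hand side, so $\gd_{f,c}$ equals the $c$-uniformity of the Gold map on $\ff{2^s}$; since $\gcd(k,s)=t$ and $s/t$ is odd (both inherited from $\gcd(k,n)=t$, $\ff{2^t}\subset\ff{2^s}$, and $n/t$ odd), the very same trace computation run over $\ff{2^s}$ shows this map is PcN, giving $\gd_{f,c}=1$. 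Substituting $\gd_{f,c}=\gd_{g,c}=\gd_{g,0}=1$ into Theorem~\ref{th:main} yields $_c\Delta_F(a,b)\le\max\{1,1\}=1$ for $a\in\ff{2^s}$ and $_c\Delta_F(a,b)\le1+2\cdot1=3$ for $a\notin\ff{2^s}$, whence $\gd_{G,c}\le3$.
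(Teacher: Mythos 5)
Your proof is correct, and its overall skeleton coincides with the paper's: hypothesis $(H1)$ via Proposition~\ref{prop:gold}, hypothesis $(H2)$ via Proposition~\ref{prop:subpcn}(i) once $g$ is known to be PcN, $\gd_{g,0}=1$ because $n/t$ odd makes $\gcd(2^k+1,2^n-1)=1$, and the final bound from Theorem~\ref{th:main}. The genuine difference lies in how the key ingredient is obtained: the paper simply cites Theorem~3.1 of~\cite{BC21} for the PcN-ness of $x^{2^k+1}$ when $c\in\ff{2^t}\setminus\{1\}$ and $n/t$ is odd, whereas you reprove it from scratch, reducing injectivity of $_cD_1g$ to the solvability of $w^{2^k}+w=W$ and ruling it out by the relative-trace computation $\Tr_{n/t}(W)=\Tr_{n/t}(1)=1\neq 0$. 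That argument is sound: since $t\mid k$, the map $w\mapsto w^{2^k}+w$ is $\ff{2^t}$-linear with kernel $\ff{2^t}$, so its image is exactly $\ker \Tr_{n/t}$, the factor $\tfrac{1}{1+c}\in\ff{2^t}$ pulls out of the trace, and $\Tr_{n/t}(d^{-2^k})=\Tr_{n/t}(d^{-1})$ kills the $d$-dependent term; the only slip is cosmetic (the normalization divides by $(1+c)d^{2^k+1}$, not $(1+c)d$), and the displayed equation is correct. Your route also buys something the paper's proof leaves implicit: Theorem~\ref{th:main} bounds the case $a\in\ff{2^s}$ by $\max\{\gd_{f,c},\gd_{g,c}\}$, so one in fact needs $\gd_{f,c}\le 3$ too, and the generic Gold bound $2^t+1$ could exceed $3$ when $t\ge 2$. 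Your observation that $\gcd(k,s)=t$ and $s/t$ is odd (both forced by $\gcd(n,k)=t$, $t\mid s$, $s\mid n$, and $n/t$ odd), so the same trace argument run over $\ff{2^s}$ gives $\gd_{f,c}=1$ (the additive constant $\ga$ merely shifts $b$ and leaves the uniformity unchanged), closes that small gap explicitly.
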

\begin{proof}
From Proposition \ref{prop:gold}, we have that $g(x)=x^{2^k+1}$ satisfies (H1) in Theorem \ref{th:main}.

Since $n/t$ is odd we have that $g$ is a permutation of $\ff{2^n}$, so $\gd_{g,0}=1$. Moreover, from Theorem 3.1 in \cite{BC21} we have that $g$ is PcN for $c\in\ff{2^t}\setminus\{1\}$.

From Proposition \ref{prop:subpcn} we have that (H2) holds.
Therefore, $\gd_{G,c}\le 3$ by Theorem \ref{th:main}.
\end{proof}

\begin{theorem}\label{th:gold}
Let $n=sm$, with $n$ odd. Given the Gold function $g(x)=x^{2^k+1}$ with $\gcd(n,k)=1$, then, for any fixed $\ga\in\ffs{2^s}$,
$G(x)=x^{2^k+1}+\ga+\ga(x^{2^s}+x)^{2^{n}-1}$ is such that $\gd_{G,c}\le 6$, for any $c\in\ff{2^s}\setminus\{1\}$.
Moreover, if $3\nmid m$, then $\gd_{G,c}\le 5$.
\end{theorem}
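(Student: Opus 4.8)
The plan is to recognize $G$ as a subfield-modified function fitting the hypotheses of Theorem~\ref{th:main}, and then feed in the known $c$-differential data for the Gold monomial. Writing out the indicator $(x^{2^s}+x)^{2^n-1}$, which equals $0$ on $\ff{2^s}$ and $1$ off it, we get
\[
G(x)=\begin{cases} x^{2^k+1}+\ga, & x\in\ff{2^s},\\ x^{2^k+1}, & x\notin\ff{2^s},\end{cases}
\]
so in the notation of Theorem~\ref{th:main} we take $f(x)=x^{2^k+1}+\ga$ (an $(s,s)$-function since $\ga\in\ffs{2^s}$) and $g(x)=x^{2^k+1}\in\F_2[x]$.

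First I would collect the numerical inputs. Because $n$ is odd and $n=sm$, the cofactor $m$ is odd; since $\gcd(n,k)=1$ and $n/\gcd(n,k)=n$ is odd, $g$ is a permutation of $\ff{2^n}$, so $\gd_{g,0}=1$. For the $c$-uniformities I would invoke the characterization recalled in the remark after Theorem~\ref{th:pante_gold} (from \cite[Theorem~4]{RS20}): for $c\neq1$, $\gd_{g,c}\le 2^{\gcd(k,n)}+1=3$. The same bound applies to $f$ over $\ff{2^s}$: since $s\mid n$ forces $\gcd(k,s)\mid\gcd(k,n)=1$, the Gold monomial over $\ff{2^s}$ has $c$-uniformity at most $2^{\gcd(k,s)}+1=3$ for every $c\in\ff{2^s}\setminus\{1\}$, and adding the constant $\ga$ only translates the target $b$ in ${}_cD_a f(x)=b$, leaving the maximal count unchanged; hence $\gd_{f,c}\le3$.

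Next I would verify the subfield hypotheses. Hypothesis $(H1)$ is exactly Proposition~\ref{prop:gold} applied with $t=\gcd(n,k)=1$ (so $\ff{2^t}=\F_2\subset\ff{2^s}$ trivially) and $m$ odd. This already suffices for the first claim: Remark~\ref{rm:noh2}, which drops $(H2)$ but keeps $(H1)$, gives for $c\in\ff{2^s}\setminus\{1\}$
\[
{}_c\Delta_G(a,b)\le\begin{cases}\gd_{f,c}+\gd_{g,c}\le 6, & a\in\ff{2^s},\\ \gd_{g,c}+2\gd_{g,0}\le 5, & a\notin\ff{2^s},\end{cases}
\]
whence $\gd_{G,c}\le6$.

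For the sharper bound under $3\nmid m$ I would additionally establish $(H2)$ and use the full Theorem~\ref{th:main}. Since $\gd_{g,c}\le3$, Proposition~\ref{prop:cdiffH2} requires only that no prime $q\le 3$ divide $m$: here $q=2$ is excluded because $m$ is odd and $q=3$ by the hypothesis $3\nmid m$, so $(H2)$ holds. Theorem~\ref{th:main} then yields ${}_c\Delta_G(a,b)\le\max\{\gd_{f,c},\gd_{g,c}\}\le3$ for $a\in\ff{2^s}$ and ${}_c\Delta_G(a,b)\le\gd_{g,c}+2\gd_{g,0}\le5$ for $a\notin\ff{2^s}$, so $\gd_{G,c}\le5$. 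The only genuinely delicate points are the gcd bookkeeping that simultaneously gives $\gd_{f,c},\gd_{g,c}\le3$ over the two fields, and the observation that the constant $\ga$ does not disturb $\gd_{f,c}$; once these are in place the bounds are assembled directly from the cited results, so I expect no substantial remaining obstacle.
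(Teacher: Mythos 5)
Your proposal is correct and follows essentially the same route as the paper's proof: hypothesis $(H1)$ via a subfield-solution result (you cite Proposition~\ref{prop:gold} with $t=1$, the paper cites Proposition~\ref{prop:sub}; for the APN Gold function these coincide), $(H2)$ via Proposition~\ref{prop:cdiffH2} using $\delta_{g,c}\le 3$ and $2,3\nmid m$, then Theorem~\ref{th:main} for the bound $5$ and Remark~\ref{rm:noh2} for the bound $6$. Your additional bookkeeping --- that $\gcd(k,s)=1$ gives $\delta_{f,c}\le 3$ over $\ff{2^s}$, that the constant $\ga$ only shifts $b$ in the $c$-derivative equation, and that $n$ odd forces $\delta_{g,0}=1$ --- is all correct and merely makes explicit what the paper leaves implicit.
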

\begin{proof}
If $3\nmid m$, then since the map is $3$ $c$-differentially uniform from Proposition \ref{prop:cdiffH2} we have that (H2) in Theorem \ref{th:main} is satisfied. The same for (H1) by Proposition \ref{prop:sub}.
Therefore, from Theorem \ref{th:main} we have that $\gd_{G,c}\le 5$ ($\gd_{g,0}=1$).

If $3\mid m$, then we cannot guarantee that (H2) in Theorem \ref{th:main} is satisfied, but  applying Remark \ref{rm:noh2} we have $\gd_{G,c}\le 6$.
\end{proof}

\begin{remark}
Theorem~\textup{\ref{th:gold}} improves the upper bound obtained by St\u anic\u a in~\textup{\cite{S21}}, when $c$ is restricted to the subfield $\ff{2^s}$.
\end{remark}

\section{Concatenating functions with low c-differential uniformity}

In this section we will show how it is possible to obtain a function over $\ff{q^n}$, with low c-differential uniformity, concatenating $n$ functions defined over $\ff{q}$.

Let $\{\beta_1,\ldots ,\beta_n\}$ be a basis of $\mathbb{F}_{q^n}$ as vector space over $\mathbb{F}_q$. Let
$$
A=\left(
\begin{array}{cccc}
\beta_1 & \beta_1^{q}&\cdots &\beta_1^{q^{n-1}}\\
\beta_2 & \beta_2^{q}&\cdots &\beta_2^{q^{n-1}}\\
\vdots &&&\vdots\\
\beta_n & \beta_n^{q}&\cdots &\beta_n^{q^{n-1}}
\end{array}\right).
$$
The matrix $A$ is non-singular, so we let $A^{-1}=(a_{i,j})_{i,j}$.

Let us denote by $e_k$ the column vector composed by all zeros but one in position $k$, for $1\le k\le n$.
We define the linear polynomial
$$L_k(x)=\sum_{i=1}^{n} a_{i,k}x^{q^{i-1}}=(x,x^q,\ldots ,x^{q^{n-1}})\cdot A^{-1}\cdot e_k. $$

Any element $x\in \mathbb{F}_{q^n}$ can be written as $x=\beta_1 x_1+\cdots+\beta_n x_n$, with $x_i\in \mathbb{F}_q$. So, we have
$$
L_k(x)=\left(\sum_{i=1}^n \beta_ix_i,\ldots,\sum_{i=1}^n \beta_i^{q^{n-1}}x_i\right)\cdot A^{-1}\cdot e_k= (x_1,\ldots,x_n)\cdot A\cdot A^{-1}\cdot e_k=x_k.
$$
That is, $L_k$ is the projection of the $k$-th component of $x$.

So we obtain the following result.

\begin{theorem}\label{concatenation}
Let $c\in\mathbb{F}_q\setminus \{1\}$ and let $f_1,\ldots,f_n$ be $n$ functions over $\mathbb{F}_q$ with c-differential uniformity $\delta_1,\ldots,\delta_n$, respectively. Let $\beta_1,\ldots,\beta_n$, $L_k$ be defined as before. Then $F(x)=\sum_{k=1}^n\beta_k f_k(L_k(x))$ has c-differential uniformity equal to $\prod_{i=1}^n \delta_i$.
\end{theorem}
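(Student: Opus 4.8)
The plan is to exploit the fact that, by the computation preceding the statement, $L_k(x)=x_k$ is simply the $k$-th coordinate of $x=\sum_{k=1}^n\beta_k x_k$ in the chosen basis, so that $F(x)=\sum_{k=1}^n\beta_k f_k(x_k)$ is a genuinely coordinatewise map. First I would fix $a,b\in\mathbb{F}_{q^n}$, write $a=\sum_k\beta_k a_k$ and $b=\sum_k\beta_k b_k$ with $a_k,b_k\in\mathbb{F}_q$, and substitute into the defining $c$-differential equation $F(x+a)-cF(x)=b$. Because each $L_k$ is $\mathbb{F}_q$-linear we have $L_k(x+a)=x_k+a_k$, and because $c\in\mathbb{F}_q$ the scalar $c$ passes through the basis expansion; hence
\[
F(x+a)-cF(x)=\sum_{k=1}^n\beta_k\big(f_k(x_k+a_k)-c\,f_k(x_k)\big),
\]
where every bracketed term lies in $\mathbb{F}_q$.

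The decisive step is then the uniqueness of the representation in the basis $\{\beta_1,\ldots,\beta_n\}$: the single equation over $\mathbb{F}_{q^n}$ is equivalent to the system of $n$ independent equations over $\mathbb{F}_q$,
\[
f_k(x_k+a_k)-c\,f_k(x_k)=b_k,\qquad 1\le k\le n.
\]
Consequently the number of $x\in\mathbb{F}_{q^n}$ solving $F(x+a)-cF(x)=b$ factors as the product $\prod_{k=1}^n N_k(a_k,b_k)$, where $N_k(a_k,b_k)$ denotes the number of $x_k\in\mathbb{F}_q$ solving the $k$-th equation.

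Finally I would take the maximum over all $(a,b)$. Since $c\neq1$, no pair $(a,b)$ is excluded from the definition of the $c$-differential uniformity, and each $N_k(a_k,b_k)\le\delta_k$ by definition of the $c$-differential uniformity of $f_k$, which already gives $\delta_{F,c}\le\prod_{k=1}^n\delta_k$. For the reverse inequality—and this is the only point that genuinely needs care, the rest being a formal decoupling—I would observe that the coordinates $(a_k,b_k)$ range over $\mathbb{F}_q\times\mathbb{F}_q$ freely and independently as $(a,b)$ ranges over $\mathbb{F}_{q^n}\times\mathbb{F}_{q^n}$. Hence one may choose, for each $k$ simultaneously, a pair $(a_k,b_k)$ attaining $N_k(a_k,b_k)=\delta_k$, and assemble the corresponding $a,b$; this single $(a,b)$ then produces exactly $\prod_{k=1}^n\delta_k$ solutions, yielding the equality $\delta_{F,c}=\prod_{i=1}^n\delta_i$. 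I would emphasize that the hypothesis $c\in\mathbb{F}_q$ is essential and is used precisely to pull $c$ through the coordinate decomposition; without it the bracketed terms would not lie in $\mathbb{F}_q$ and the system would fail to decouple.
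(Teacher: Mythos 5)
Your proof is correct and follows essentially the same route as the paper: expand $x$, $a$, and $b$ in the basis $\{\beta_1,\ldots,\beta_n\}$, use $c\in\mathbb{F}_q$ to decouple $F(x+a)-cF(x)=b$ into the $n$ independent equations $f_k(x_k+a_k)-cf_k(x_k)=b_k$ over $\mathbb{F}_q$, bound each coordinate count by $\delta_k$, and assemble coordinatewise extremal pairs $(a_k,b_k)$ to attain exactly $\prod_{k=1}^n\delta_k$ solutions. Your explicit observations that $c\neq 1$ ensures no pair $(a,b)$ is excluded from the definition and that $c\in\mathbb{F}_q$ is precisely what lets the scalar pass through the basis expansion are left implicit in the paper, but the underlying argument is identical.
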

\begin{proof}
For any $a\in\mathbb{F}_{q^n}$, with $a=\beta_1a_1+\cdots+\beta_na_n$, we have
\begin{align*}
F(x+a)-cF(x)&=\sum_{k=1}^n\beta_k f_k(x_k+a_k)-c\sum_{k=1}^n\beta_k f_k(x_k)\\
&=\sum_{k=1}^n\beta_k (f_k(x_k+a_k)-cf_k(x_k)).
\end{align*}
So if we consider $b=\beta_1b_1+\cdots+\beta_nb_n$ we have
$$
F(x+a)-cF(x)=b, \text{ that is, }  f_k(x_k+a_k)-cf_k(x_k)=b_k, \text{ for all $k$}.
$$
The equation $f_k(x_k+a_k)-cf_k(x_k)=b_k$ admits at most $\delta_k$ solutions for any $a_k$ and $b_k$ in $\mathbb{F}_q$, and there exist some $a_k$ and $b_k$ for which we have $\delta_k$ solutions. So, we obtain that $F(x+a)-cF(x)=b$ admits at most $\prod_{i=1}^n \delta_i$ solutions and we can find $a$, and $b$ for which we obtain exactly $\prod_{i=1}^n \delta_i$ solutions.
\end{proof}

Using the previous result, we can construct a PcN function over $\mathbb{F}_{q^n}$ from~$n$ PcN functions over $\mathbb{F}_q$.
\begin{corollary}
Let $c\in\mathbb{F}_q\setminus \{1\}$ and let $f_1,\ldots ,f_n$ be $n$ functions over $\mathbb{F}_q$ that are PcN. Then $F(x)=\sum_{k=1}^n\beta_k f_k(L_k(x))$ is PcN.
\end{corollary}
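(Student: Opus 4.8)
The plan is to derive this directly from Theorem~\ref{concatenation}, since the statement is essentially the specialization of that theorem to the case where every constituent function attains the minimal possible $c$-differential uniformity. Recall from the definitions in Section~1 that a function is PcN precisely when its $c$-differential uniformity equals~$1$. Thus the first step is simply to translate the hypothesis: each $f_k$ being PcN over $\mathbb{F}_q$ means that $\delta_k = \delta_{f_k,c} = 1$ for every $k$ with $1 \le k \le n$.

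Next I would invoke Theorem~\ref{concatenation} verbatim. That theorem asserts that the concatenated function $F(x) = \sum_{k=1}^n \beta_k f_k(L_k(x))$ has $c$-differential uniformity exactly $\prod_{i=1}^n \delta_i$. Substituting the values $\delta_i = 1$ obtained in the previous step yields
\[
\delta_{F,c} = \prod_{i=1}^n \delta_i = \prod_{i=1}^n 1 = 1.
\]
By the definition of perfect $c$-nonlinearity, $\delta_{F,c} = 1$ is exactly the assertion that $F$ is PcN, which completes the argument.

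There is genuinely no substantive obstacle here: the entire content has already been established in Theorem~\ref{concatenation}, whose proof handles the coordinatewise decomposition $F(x+a) - cF(x) = \sum_k \beta_k\bigl(f_k(x_k+a_k) - c f_k(x_k)\bigr)$ and the resulting product count of solutions. The only point worth double-checking is that the hypothesis $c \in \mathbb{F}_q \setminus \{1\}$ required by Theorem~\ref{concatenation} is identical to the hypothesis imposed here, so the theorem applies without any additional restriction; the condition $c \neq 1$ is exactly what is needed for PcN functions to exist in the relevant characteristics, and it is already built into the setup. Hence the corollary follows immediately.
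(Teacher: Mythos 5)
Your proof is correct and matches the paper exactly: the paper states this corollary as an immediate consequence of Theorem~3.1, and your argument (PcN means $\delta_k=1$ for each $k$, so the theorem gives $\delta_{F,c}=\prod_{i=1}^n 1=1$, i.e., $F$ is PcN) is precisely that specialization. Nothing further is needed.
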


\section{Concluding remarks}
\label{sec5}
In this work we extended some of the results given in \cite{C21} to the case of the $c$-differential uniformity. We piece together (in several ways) subfunctions and provide upper bounds for the $c$-differential uniformity of the obtained function. As a byproduct, we improve some prior results of~\cite{S21}. Further, we look at concatenations of functions with low differential uniformity and check how their $c$-differential uniformity changes. In particular, we prove that given $\beta_i$ (a basis of $\F_{q^n}$ over $\F_q$), some functions
 $f_i$ of $c$-differential uniformities $\delta_i$,  and some specific linearized polynomials $L_i$ defined in terms of $\beta_i$, $1\leq i\leq n$, then $\displaystyle F(x)=\sum_{i=1}^n\beta_i f_i(L_i(x))$ has $c$-differential uniformity equal to $\displaystyle \prod_{i=1}^n \delta_i$.
 We believe, it would be of interest to investigate these constructions for the case of the newly defined generalized boomerang uniformity, as in~\cite{S20} (see also~\cite{S20_Weil}, for other characterizations).

 \end{document}